\documentclass[sigconf, authorversion]{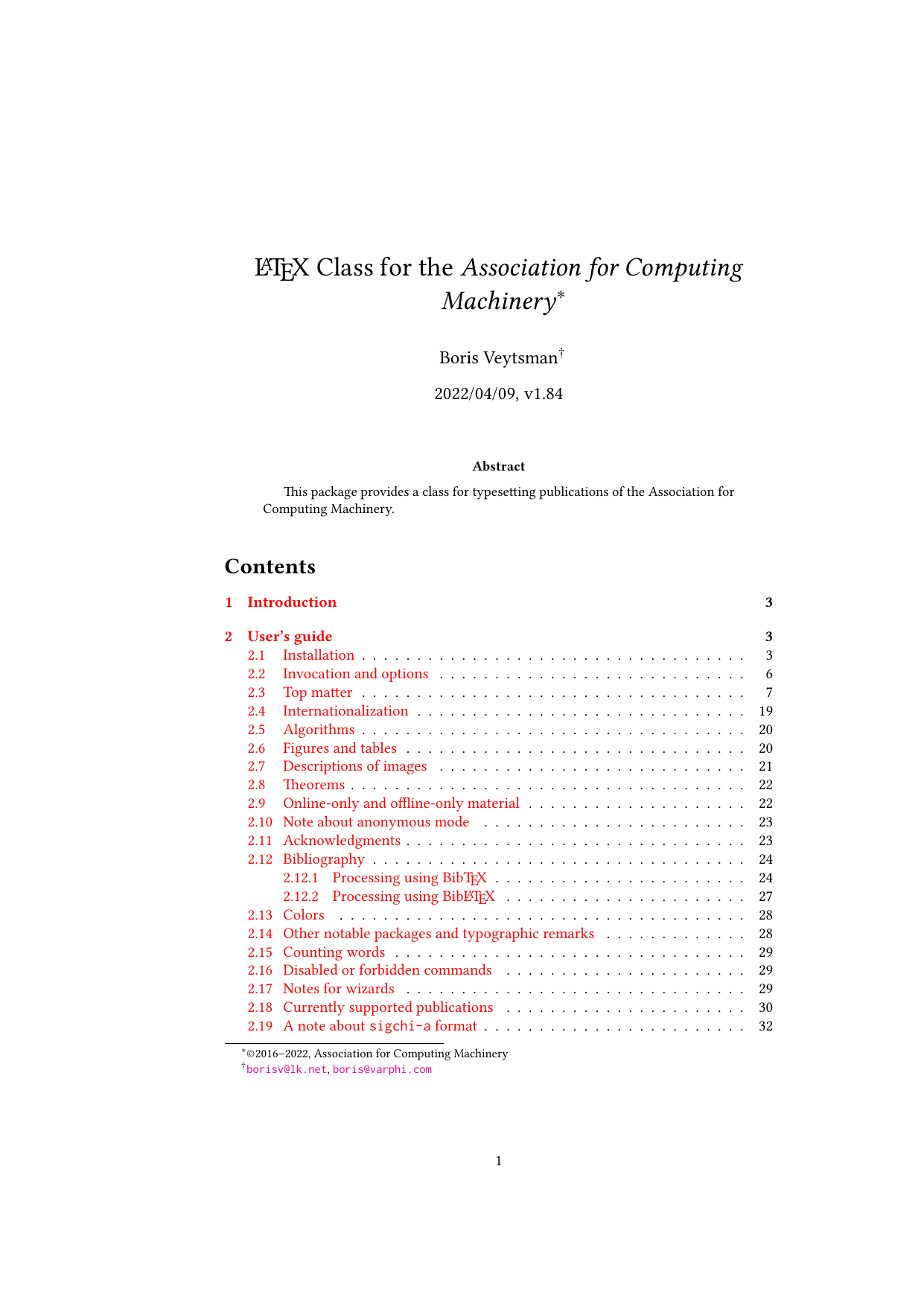}

\AtBeginDocument{%
	\providecommand\BibTeX{{%
			\normalfont B\kern-0.5em{\scshape i\kern-0.25em b}\kern-0.8em\TeX}}}

\copyrightyear{2023} 
\acmYear{2023} 
\setcopyright{acmlicensed}
\acmConference[WSDM '23]{Proceedings of the Sixteenth ACM International Conference on Web Search and Data Mining}{February 27-March 3, 2023}{Singapore, Singapore}
\acmBooktitle{Proceedings of the Sixteenth ACM International Conference on Web Search and Data Mining (WSDM '23), February 27-March 3, 2023, Singapore, Singapore}
\acmPrice{15.00}
\acmDOI{10.1145/3539597.3570442}
\acmISBN{978-1-4503-9407-9/23/02}



\usepackage{amsmath,amsfonts,amsthm}
\usepackage{url}
\usepackage{color}
\usepackage[capitalise]{cleveref}
\usepackage{graphicx}
\usepackage{caption} \usepackage{subcaption, enumitem}
\usepackage{hhline}
\usepackage{algorithm}
\usepackage{graphicx}
\usepackage{colortbl}
\usepackage{algpseudocode}
\usepackage{float}
\usepackage{threeparttable}
\usepackage{wrapfig}
\makeatletter


\newcommand{\R}{\mathbb{R}}

\newcommand{\eqdef}{\mathbin{\stackrel{\rm def}{=}}}
\newcommand{\norm}[1]{\|#1\|}

\definecolor{lRed}{cmyk}{0,0.28,0.28,0.0}
\definecolor{lBlue}{cmyk}{0.25,0.17,0.0,0.0}

\usepackage{color}

\makeatletter

\newtheorem*{rep@theorem}{\rep@title}
\newcommand{\newreptheorem}[2]{%
	\newenvironment{rep#1}[1]{%
		\def\rep@title{#2 \ref{##1}}%
		\begin{rep@theorem}}%
		{\end{rep@theorem}}}
\makeatother
\newtheorem{theorem}{Theorem}
\newreptheorem{theorem}{Theorem}

\newtheorem{corollary}[theorem]{Corollary}
\newtheorem{lemma}[theorem]{Lemma}
\newtheorem*{lemma*}{Lemma}
\newreptheorem{lemma}{Lemma}
\newreptheorem{corollary}{Corollary}
\newtheorem{fact}[theorem]{Fact}

\newtheorem{definition}[theorem]{Definition}

  \usepackage{nth}
  \usepackage{intcalc}

\begin{document}
	
\title{Local Edge Dynamics and Opinion Polarization}
	
\author{Nikita Bhalla}
\affiliation{
  \institution{University of Massachusetts Amherst}
  \country{Amherst, MA, USA}
}
\email{nbhalla@cs.umass.edu}

\author{Adam Lechowicz}
\affiliation{
  \institution{University of Massachusetts Amherst}
  \country{Amherst, MA, USA}
}
\email{alechowicz@cs.umass.edu}

\author{Cameron Musco}
\affiliation{
  \institution{University of Massachusetts Amherst}
  \country{Amherst, MA, USA}
}
\email{cmusco@cs.umass.edu}

	
	\begin{abstract}
		The proliferation of social media platforms, recommender systems, and their joint societal impacts have prompted significant interest in opinion formation and evolution within social networks. We study how \emph{local edge dynamics} can drive opinion polarization. In particular, we introduce a variant of the classic Friedkin-Johnsen opinion  dynamics, augmented with a simple time-evolving network model. Edges are iteratively added or deleted according to simple rules, modeling decisions based on individual preferences and network recommendations.
		
		Via simulations on synthetic and real-world graphs, we find that the combined presence of two  dynamics gives rise to high polarization: 1) \emph{confirmation bias} -- i.e., the preference for nodes to connect to other nodes with similar expressed opinions and 2) \emph{friend-of-friend link recommendations}, which encourage new connections between closely connected nodes. 
		We show that our model is tractable to theoretical analysis, which helps  explain how these local dynamics erode connectivity across opinion groups,  affecting polarization and a related measure of disagreement across edges.  
		Finally, we validate our model against real-world data, showing that our edge dynamics drive the structure of arbitrary graphs, including random graphs, to more closely resemble real  social networks.
		
	Our code and supplemental materials are available at \textcolor{blue}{\url{https://github.com/adamlechowicz/opinion-polarization/}}.
	
	\end{abstract}
	
	\begin{CCSXML}
		<ccs2012>
		<concept>
		<concept_id>10002950.10003624.10003633.10003638</concept_id>
		<concept_desc>Mathematics of computing~Random graphs</concept_desc>
		<concept_significance>300</concept_significance>
		</concept>
		<concept>
		<concept_id>10003752.10010061.10010069</concept_id>
		<concept_desc>Theory of computation~Random network models</concept_desc>
		<concept_significance>300</concept_significance>
		</concept>
		</ccs2012>
	\end{CCSXML}
	
	\ccsdesc[300]{Mathematics of computing~Random graphs}
	\ccsdesc[300]{Theory of computation~Random network models}
	
	\keywords{opinion polarization, Friedkin-Johnsen model, polarization and disagreement, recommender systems, random graph models}
	
	\maketitle
	
	\vspace{-.5em}
	\section{Introduction}
	Over the last twenty years, the rise of massive social media platforms has significantly increased 
	information sharing and human interaction around the globe.  
	While information availability and richer online interactions are positive effects of this social shift, there is increasing concern about the ability of these platforms to polarize and divide us \cite{HartNisbet:2012, LeeChoiKim:2014, DellaPosta:2020, BaldassarriGelman:2008, GuerraMeiraKleinberg:2013}.  
	This polarization seems to be driven by both established factors of human social dynamics, along with new dynamics, driven by the  behavior of the social media platforms themselves.
	
	An example of a human behavior driving polarization is \textit{confirmation bias}, the tendency to avoid information that challenges our own views and to seek information that confirms them \cite{Nickerson:1998}. Confirmation bias is amplified on social media platforms due to the increased availability of opinion-confirming content. Indeed, it is thought to be a key driver of the online spread of conspiracy theories and fake news in recent years \cite{MeppelinkSmitFransenDiviani:2019,Tandoc:2019, PatashnikSchiller:2021}. 
	
	An example of the behavior of social media platforms driving polarization is the use of \emph{recommender systems} to filter and deliver content that maximizes user engagement. 
	Such recommendations can create \emph{filter bubbles}, which further strengthen the power of confirmation bias and drive polarization \cite{Pariser:2012, LeeChoiKim:2014, GuerraMeiraKleinberg:2013}.
	
	\vspace{-.5em}
	\subsection{Our Model}
	\vspace{-.25em}
	We seek to understand how local edge updates (i.e., insertions and deletions) driven by both human behavior and the behavior of social media platforms may cause opinion polarization. To do so we introduce a simple model of network and opinion coevolution, which 1) is based on established opinion dynamics models, 2) captures the effects of both confirmation bias and recommender systems, and 3) remains tractable to theoretical analysis and efficient simulation. This model is detailed in Section \ref{sec:prelim} and summarized below. Beyond our work, we hope that the model will serve as a useful platform for further  investigation of network and opinion coevolution.
	
	\smallskip
	
	\noindent\textbf{Opinion Dynamics and Polarization.} We build on the classic Friedkin-Johnsen opinion  model, \cite{FriedkinJohnsen:1990}, which models how individuals' \emph{expressed opinions} (represented as real numbers) are influenced by their \emph{innate opinions}, along with the expressed opinions of their neighbors in a network. A node's innate opinion is fixed at initialization, and its expressed opinion at any time is an average of this innate opinion with the expressed opinions of its neighbors. 
	
	We consider two metrics studied in prior work \cite{MuscoMuscoTsourakakis:2018}: \emph{opinion polarization}, which is the variance of the expressed opinions, and \emph{disagreement}, which is the total squared difference of \emph{expressed opinions} summed over all edges in the network. Under the Friedkin-Johnsen model, polarization and disagreement tend to counteract each other -- with low disagreement, nodes tend to be connected to other nodes with similar opinions, and polarization tends to be high. With high disagreement, nodes are connected to other nodes with a diversity of opinions, limiting polarization. Both values can be efficiently computed in closed form \cite{MuscoMuscoTsourakakis:2018,XuBaoZhang:2021}, making them tractable for both theoretical and empirical investigation.
	
	
	\smallskip
	
	\noindent\textbf{Edge Dynamics.}
	We consider a model where the network and expressed node opinions coevolve over time.  At each time step, users within the network stochastically delete connections to some of their neighbors and add new connections. To model confirmation bias,
	edges that are more disagreeable are deleted with higher probability. To model recommender systems, new edges are added to random \emph{friend-of-friends} -- i.e., two-hop connections in the network. Friend-of-friend recommendations are a popular form of edge recommendations made by social media platforms \cite{SuSharmaGoel:2016,DalyGeyerMillen:2010}. 
	
	We also fix a small percentage of edges at initialization, 
	which are not considered for deletion at any time step.
	These edges can be thought of as modeling connections that are independent of opinions or recommendations, e.g., to family members or co-workers.  
	
	\smallskip
	
	\noindent\textbf{Simulation and Theoretical Analysis.}
	To study how the above opinion and edge dynamics interact to drive opinion polarization, we employ both simulation and theoretical analysis.
	In our simulations,
	we start with an initial graph, either randomly generated from an established model, or taken from a snapshot of a real social network. We also start with innate opinions, which are randomly generated over the interval $[-1,1]$. 
	We then simulate our edge and opinion dynamics, recording how opinions, polarization, disagreement, and the graph structure evolve over time. 
	Our theoretical results, which help explain many of the dynamics observed in simulation, leverage  well-established formulas for polarization and disagreement in the Friedkin-Johnsen model, defined in Section \ref{sec:prelim}.  
	

	\subsection{Our Findings}
	Our main findings are as follows:
	
	\smallskip
	
	\noindent\textbf{1. Confirmation bias and recommendations drive polarization.} We find that, when both confirmation bias and friend-of-friend link recommendations are part of the edge dynamics model, 
	the network becomes polarized -- nodes sort into distinct clusters of similar and opposing opinions.  When the network has no fixed edges, edge dynamics splinter it into clusters such that expressed opinions are very close to the innate opinions, and polarization nearly reaches its maximum value (see Secs. \ref{sec:happens} and \ref{sec:evolves}).

	
	If either one of these dynamics is removed (i.e., either edge additions or removals are just made randomly), polarization remains low. This finding is generally robust to the initial graph structure and innate opinion distribution, with the exception of dense random graphs, where friend-of-friend recommendations lose their power -- they behave similarly to random recommendations. 
	See Fig. \ref{fig:compIntro} for an illustration and Sec.\ref{sec:density} for more detailed discussion.
	
	While perhaps intuitive, the above finding is far from obvious. In the Friedkin-Johnsen model, a node's expressed opinion is heavily influenced by those of its neighbors. So, initially, a node's expressed opinion will depend very little on its innate opinion. Thus, it is not clear that removing disagreeable edges (where disagreement is with respect to the \emph{expressed} opinions) and adding friend-of-friends will lead to polarization. 
	We also find it surprising that \emph{both} friend-of-friend recommendations and confirmation bias are needed to drive polarization, initially conjecturing that just confirmation bias would suffice. Recommender systems seem to act as a key catalyst in amplifying human behavior that favors polarization.  
	
	\begin{figure}[h]
		\minipage{0.49\textwidth}
		\includegraphics[width=\linewidth]{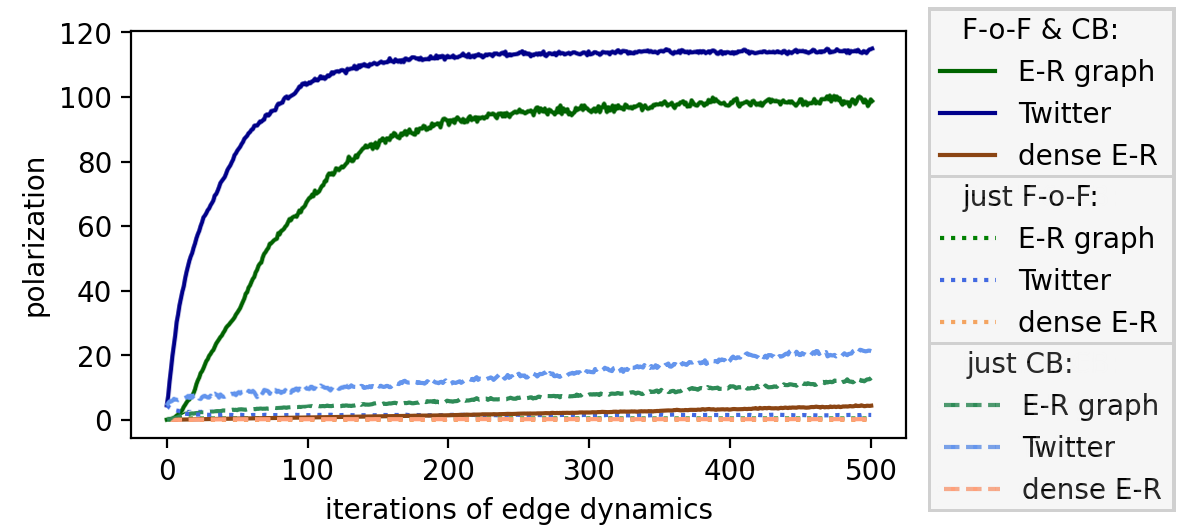}
		\endminipage
		\vspace{-0.5em}
		\caption{Opinion polarization over time for an Erd\"{o}s-Renyi (ER) graph with $1000$ nodes and connection probability $p =~0.05$, a real-world Twitter social network \cite{DeAbir:2014}, and a \textit{dense} ER graph with 1000 nodes and connection probability $p = 0.1$.  Each network has 5\% fixed edges.  Innate opinions are uniformly distributed in $[-1,1]$. Except in the dense ER graph, with edge dynamics influenced by both \textit{confirmation bias} (CB) and \textit{friend-of-friend recommendations} (F-o-F), polarization rises significantly before asymptoting.  When either factor is removed, polarization remains  low.  
		\vspace{-1em}}\label{fig:compIntro}
	\end{figure}
	
	\smallskip
	
	\noindent\textbf{2. Our model is tractable to theoretical analysis.} 
	To complement our simulation results, we  give theoretical bounds which illustrate how  edge removals and insertions can drive polarization and disagreement in the Friedkin-Johnsen model. We prove that swapping an edge with large expressed opinion disagreement across it for a new edge with small disagreement monotonically drives up the \emph{sum of polarization and disagreement} in the graph, a combined quantity studied in prior work \cite{MuscoMuscoTsourakakis:2018}, whose rise seems closely correlated with a rise in polarization itself. 
	
	This finding helps explain how  both confirmation bias and friend-of-friend recommendations drive polarization. 
	In each iteration of our edge dynamics, we remove disagreeable edges, and replace them with friend-of-friend connections. Initially, these friend-of-friend connections are somewhat random (i.e., uncorrelated with the expressed node opinions), but still more agreeable on average than the removed connections. Eventually, as opinion groups separate and the removed edges become less  disagreeable, so do the friend-of-friend connections. Thus, polarization continues rising.
	
	
	We also give an understanding of our model based on the Stochastic Block Model (SBM).  Using a simple method to split nodes into two opinion groups (i.e., one group with innate opinions $< 0$ and one group with innate opinions $> 0$), we coarsely approximate the underlying network as an SBM graph with two blocks, drawn from a distribution with in-group connection probability $p$ and out-group connection probability $q$.  We set $p$ and $q$ to match the in-group and out-group connection densities in the true graph. 
	
	Following prior work \cite{ChitraMusco:2020}, we employ closed form expressions for the polarization and disagreement in the expected SBM graph. 
	We demonstrate empirically that  these expressions yield good approximations to the true polarization and disagreement over time -- see Fig. \ref{fig:fixedIntro}. This indicates  that the evolution of polarization and disagreement in our model can be understood largely in terms of  in-group and out-group connection densities. With confirmation bias and friend-of-friend recommendations, the out-group connection density is eroded over time.  If either dynamic is replaced with the random control, we observe that this connection density does not decrease to the level necessary for high polarization.
	
	Our SBM-based approximations  also predict and explain interesting phenomena in our model. For instance, in networks with no fixed edges, we see that the graph eventually splinters into many connected components with very similar innate opinions, causing polarization to reach a maximum and disagreement to drop to near zero. However, in the presence of a small fraction of fixed edges, while the network becomes more polarized, it remains connected. Surprisingly, in this setting, disagreement, like polarization, tends to increase over time, before asymptoting -- see Fig. \ref{fig:fixedIntro}.
	
	\begin{figure}[h]
		\minipage{0.48\textwidth}
		\includegraphics[width=\linewidth]{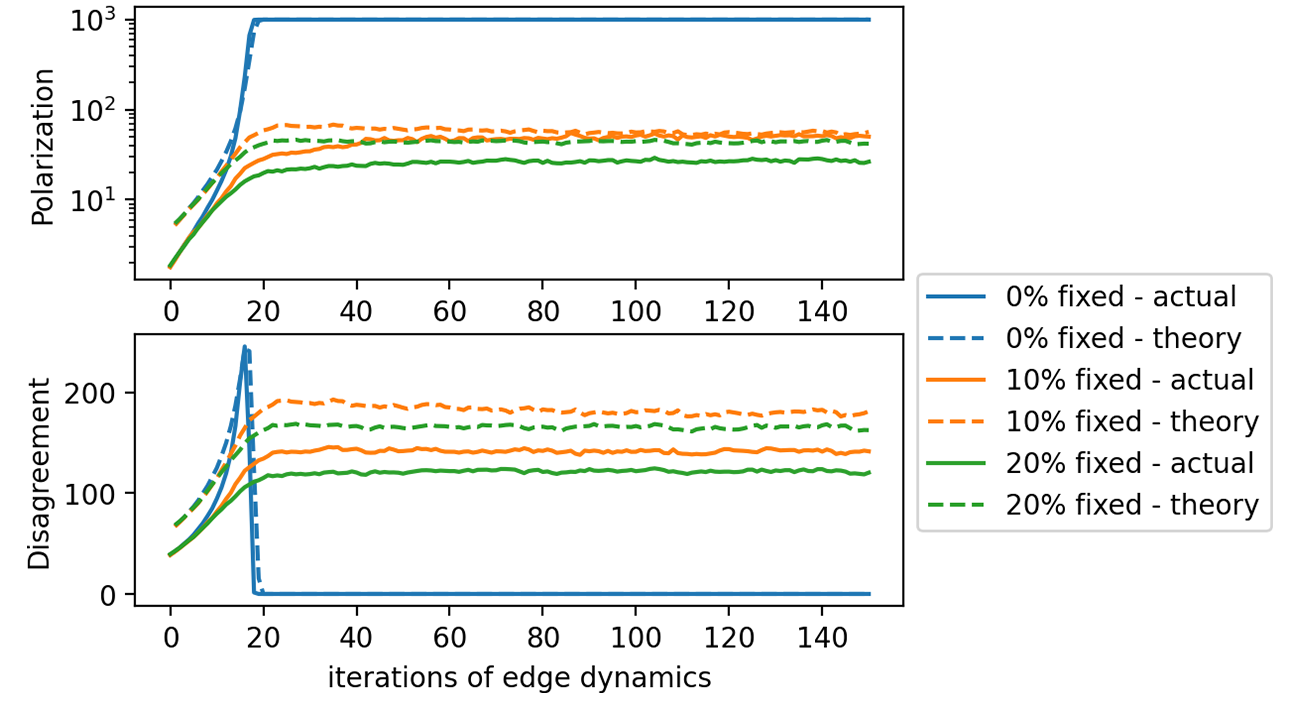}
		\vspace{-1em}
		\endminipage\hfill
		\vspace{-1em}
		\caption{Polarization and disagreement  for an Erd\"{o}s-Renyi graph with $1000$ nodes, varying percentages of fixed edges, and average degree $25$. Innate opinions are uniformly distributed in $[-1,1]$. Estimates of polarization and disagreement based on our SBM approximation closely match observed values and reflect important patterns -- e.g., that even a small percentage of fixed edges significantly limits polarization and prevents disagreement from dropping over time.
			\vspace{-1em}}\label{fig:fixedIntro}
		
	\end{figure}
	
	
	\noindent\textbf{3. Our model creates ``natural-looking'' networks.}
	Finally, we give
	evidence that our model gives rise to graph structures that  resemble real-world networks. Even when the initial network is an Erd\"{o}s-Renyi graph, measures such as the degree distribution and triangle density shift to resemble real social networks due to our edge  dynamics. E.g., while the degree distribution of the Erd\"{o}s-Renyi graph is initially binomial (approximately normal), our edge dynamics drive the network to a steady state degree distribution that appears closer to a power-law, as expected in real social networks \cite{Muchnik:2013}.  See Fig. \ref{fig:ERDeg}.  
	Such findings help to validate our model's realism, as a mechanism for opinion and network coevolution.
	
	\begin{figure}[h]
		\minipage{0.36\textwidth}
		\vspace{-0.5em}
		\includegraphics[width=\linewidth]{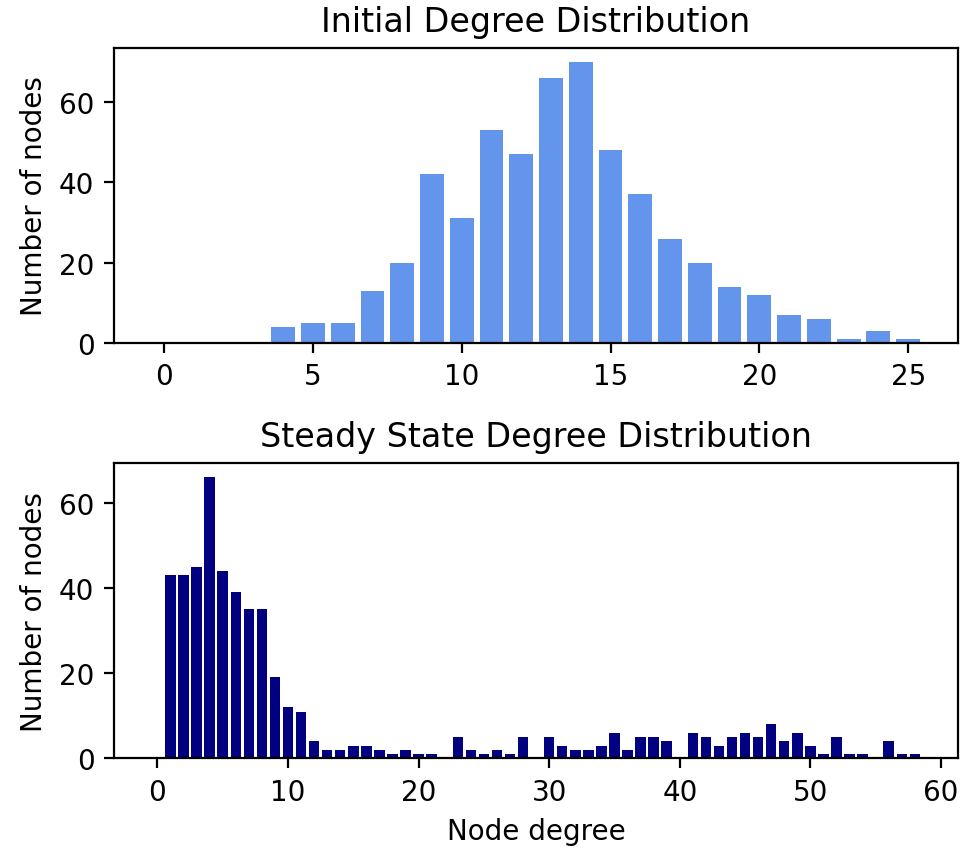}
		\vspace{-2em}
		\endminipage
		\caption{Initial and steady state degree histograms for an Erd\"{o}s-Renyi graph with $531$ nodes \& average degree $\approx 14$, subject to our edge dynamics. 
			The steady state distribution differs significantly from the initial distribution. It is closer to a power law distribution, reflecting a more realistic network structure. 	\vspace{-2em}}\label{fig:ERDeg}
		
	\end{figure}

	\subsection{Related Work}
	
	Polarization and its connection to recommender systems and opinion dynamics has seen significant research interest, particularly in the wake of Pariser's 2012 `filter bubble' hypothesis \cite{Pariser:2012}.  
	
	
	A number of papers consider the effect of link recommendations, including friend-of-friend recommendations, on network structure \cite{DalyGeyerMillen:2010,SuSharmaGoel:2016}. Others consider the effect of edge rewiring and innate opinion perturbation on polarization, including within the Friedkin-Johnsen model \cite{MuscoMuscoTsourakakis:2018,AbebeKleinbergParkes:2018,ChenRacz:2020,ChitraMusco:2020,GaitondeKleinbergTardos:2020, Racz:2022tu}. Edge rewiring can model a recommender system, an adversary that seeks to maximize polarization, or a benevolent administrator that seeks to minimize polarization. Generally, these works focus on how a single intervention can effect polarization, often finding that relatively minor changes can have significant impact.
	Unlike our work, they do not consider how opinions and the underlying network coevolve and drive polarization over time.
	
	Several works do consider opinion and network coevolution \cite{HolmeNewman:2006,CastellanoFortunatoLoreto:2009,DandekarGoelLee:2013,BaumannLorenz-SpreenSokolov:2020,SasaharaChenPeng:2020}. A common finding, matching our results, is that confirmation bias (also called homophily), itself is not  sufficient to drive significant polarization \cite{DandekarGoelLee:2013, SasaharaChenPeng:2020}. Dandekar et al. \cite{DandekarGoelLee:2013} introduce another psychological factor, \emph{biased assimilation}, in which nodes that are presented a mixture of opinions  give undue support to their initial opinion. They show that this dynamic drives polarization in combination with confirmation bias. Sasahara et al. \cite{SasaharaChenPeng:2020} present a model in which a user's expressed opinion only takes into account sufficiently agreeable neighbors. They show that this behavior 
	drives the network into a bimodal opinion distribution. They also show that direct recommendations of agreeable edges accelerate polarization. 
	
	Like our work, several related works study the validity of their synthetic opinion network models compared to real-world data \cite{SasaharaChenPeng:2020, EvansFu:2018, SuSharmaGoel:2016}.   Although specific validation methods vary, many works examine graph structures such as clustering, triangles, and degree distribution.  Sasahara et al. \cite{SasaharaChenPeng:2020} use such measures to show that their model can produce a snapshot which has similar features to real-world social network data.  Others show that these graph structures can reflect changes to the recommendation dynamics in a social network \cite{SuSharmaGoel:2016}, or correspond with parameters that change the behavior of their synthetic model \cite{EvansFu:2018}.
	
	The above works are complementary to ours. They use custom opinion dynamics models, while we build on the standard Friedkin-Johnsen model. This ties our work to well-established studies of opinion dynamics and allows us to leverage  theoretical tools from prior work. Additionally, prior work  does not consider the effect of fixed edges or friend-of-friend link recommendations, focusing instead on other important dynamics that can drive polarization.
	
	Contrasting our work, some works suggest that recommendation systems can actually mitigate filter bubble effects \cite{NguyenHuiHarper:2014,AridorGoncalves:2020} . These works focus on diversity of content consumption, showing that it can be increased by recommendations based on natural collaborative filtering. 
	%
	%
	%
	Other works propose remedies for `fixing' the polarizing effect of recommender systems, or augmenting networks to reduce polarization \cite{CelisKapoor:2019, HaddadanShahrzadMenghini:2021, Racz:2022tu, ChitraMusco:2020}.  Some of these works \cite{Racz:2022tu, ChitraMusco:2020} also work with the Friedkin-Johnsen model.
	It would be interesting to understand the effects of these remedies in our opinion and edge dynamics model.
	
	
	\section{Opinion and Edge Dynamics Model}\label{sec:prelim}

	We start by defining preliminaries and detailing our model of opinion formation under local edge dynamics.
	\vspace{-.5em}
	\subsection{Opinion Dynamics} 
	We work with the Friedkin-Johnsen opinion model \cite{FriedkinJohnsen:1990}. There are $n$ individuals, connected by an undirected graph $G$ with Laplacian matrix $L \in \R^{n \times n}$. There is an \emph{innate opinion vector} $s \in \R^n$, whose entries represent each individual's opinion without influence from neighbors.  Node opinions are numerically coded along the interval $[-1,1]$, and we assume they are drawn from a distribution with mean $0$. 
	An  \emph{expressed opinion vector} $z \in \R^n$ models the individuals' opinions under influence from their neighbors. $z$ is obtained by repeatedly applying the opinion averaging update:
	$
	z(i) := \frac{s(i) + \sum_{j} w_{ij} z(j)}{1+\sum_{j} w_{ij}},
	$
	where $w_{ij}$ is the weight of the edge between node $i$ and node $j$. For simplicity, we only consider unweighted graphs, where $w_{ij} = 0$ if there is no edge and $w_{ij} =1$ if there is an edge. This update converges to an equilibrium, with 
	$z = (I+L)^{-1} s$.  Note that since the innate opinion vector $s$ is mean-centered, $z$ will also be mean-centered, as shown in Proposition 2 from \cite{MuscoMuscoTsourakakis:2018}
	
	
	
	\vspace{-.5em}
	\subsection{Edge Dynamics}\label{sec:edgedyn}
	In our model, the network and the expressed opinions coevolve over time. The innate opinions are fixed at initialization. 
	%
	%
	%
	Let $L_t \in \R^{n \times n}$ denote the network Laplacian at time $t$, and $z_t = (I+L_t)^{-1}s$ denote the expressed opinions. At each time step, we compute a set of edges to be removed and a set of edges to be added to $G$. These sets typically depend on the expressed opinions $z_t$. After the removals and additions, $L_t$ is updated to $L_{t+1}$, and the process continues.
	
	At initialization, we  select a small percentage (typically $5\%$) of random edges to be \textit{fixed}, and so not subject to deletion. They can be thought of as modeling connections that are independent of opinions or recommendations, e.g., to family  or co-workers. 
	
	\smallskip
	
	\noindent\textbf{Edge Removals.}
	We set a percentage $p$ of edges in the graph to be removed in each step. Typically, $p = 10\%$.
	In the Appendix, in Fig. \ref{fig:conn}, we show that the choice of $p$ has little effect on the model's behavior -- a smaller or larger value of $p$ simply scales the number of time steps necessary for the edge dynamics to converge.
	
	We then select the set to be removed according to a probability distribution. In the control, this distribution is uniform over non-fixed edges. To model confirmation bias, the distribution is based on expressed opinion disagreement. A non-fixed edge $(i,j)$ is removed with probability proportional to $|z_t(i) - z_t(j)|$. In this way, nodes tend to remove connections to other nodes that express conflicting opinions, while keeping edges that confirm their own opinion.  
	
	We sample $k = \lfloor p \cdot e \rfloor$ edges to be removed, without replacement, where $e$ is the number of edges in the graph, excluding any fixed edges.  We then iterate over the sampled edges, removing each from the graph in turn. 
	
	\smallskip
	
	\noindent\textbf{Edge Additions.} 
	Given the number of edges $r$ that were removed at the current step, we select $r$ edges to be inserted. For the control, we simply sample $r$ edges that are not currently in the graph uniformly at random. To model friend-of-friend recommendations, we select $r$ edges iteratively. We select a random node and compute its friend-of-friends set -- i.e., its two-hop neighbors. We then select one of these friend-of-friends uniformly at random and add an edge between it and the source node to an `addition set'.  The process continues until there are $r$ edges in the addition set, at which point all edges in the set are added to the graph.
	
	\vspace{-.5em}
	\subsection{Polarization and Disagreement}\label{sec:pd}
	
	The primary quantities that we measure as the network $L_t$ and expressed opinions $z_t$ coevolve over time are \emph{polarization} and \emph{disagreement}. The polarization is the variance of the expressed opinions, a common choice in the literature \cite{MuscoMuscoTsourakakis:2018,AbebeKleinbergParkes:2018,BrooksPorter:2020}. The disagreement measures the variance of expressed opinions just across edges currently in the graph. It is high if nodes tend to be connected to other nodes with very different opinions, and low if nodes tend to be connected to other nodes with similar opinions.
	
	As shown in \cite{MuscoMuscoTsourakakis:2018}, in the Friedkin-Johnsen model, polarization and disagreement can be written as quadratic forms involving the network Laplacian and the innate opinion vector. For simplicity, we assume throughout that our innate opinions have mean $0$.
	
	\begin{fact}[Polarization]\label{def:p} Consider a graph with Laplacian matrix $L \in \R^{n \times n}$, along with a mean $0$ innate opinion vector $s \in \R^n$. Let $z = (I+L)^{-1}s$ be the equilibrium expressed opinion vector in the Friedkin-Johnsen model. The polarization is given by $$P(L,s) = \norm{z}_2^2 \eqdef s^T (I+L)^{-2} s.$$
	\end{fact}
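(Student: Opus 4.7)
The plan is to unpack the definition of polarization as the variance of the expressed opinion vector, and then reduce the identity to a one-line matrix manipulation. The only substantive point to verify is that the expressed opinion vector $z$ inherits mean $0$ from the innate opinion vector $s$; once that is in place, variance collapses to $\|z\|_2^2$ and the quadratic form $s^T(I+L)^{-2}s$ falls out directly.

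First, I would establish that $\mathbf{1}$ is an eigenvector of $(I+L)^{-1}$ with eigenvalue $1$. Since $L$ is a graph Laplacian, $L\mathbf{1} = 0$, so $(I+L)\mathbf{1} = \mathbf{1}$, and hence $(I+L)^{-1}\mathbf{1} = \mathbf{1}$. Because $(I+L)^{-1}$ is symmetric, this yields
\[
\mathbf{1}^T z = \mathbf{1}^T (I+L)^{-1} s = \bigl((I+L)^{-1}\mathbf{1}\bigr)^T s = \mathbf{1}^T s = 0,
\]
using the standing assumption that $s$ is mean-centered. Thus $z$ is also mean $0$, and its (uncentered) variance equals $\tfrac{1}{n}\|z\|_2^2$; under the convention used in the paper (which tracks prior work and drops the $1/n$ factor), the polarization is exactly $\|z\|_2^2$.

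Second, I would plug in $z = (I+L)^{-1}s$ and exploit symmetry of $I+L$:
\[
\|z\|_2^2 \;=\; z^T z \;=\; \bigl((I+L)^{-1}s\bigr)^T (I+L)^{-1} s \;=\; s^T (I+L)^{-T}(I+L)^{-1} s \;=\; s^T (I+L)^{-2} s,
\]
where in the last step I use that $(I+L)$ is symmetric so $(I+L)^{-T} = (I+L)^{-1}$, and that $(I+L)^{-1}$ commutes with itself. This yields the claimed closed form.

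There is no genuine obstacle here; the only subtlety is the mean-zero check on $z$, which is the one place the Laplacian structure of $L$ (as opposed to an arbitrary PSD matrix) is used. Everything else is routine linear algebra, and the result is really a repackaging of the standard Friedkin-Johnsen equilibrium formula $z = (I+L)^{-1}s$ together with the normalizing convention for polarization established in prior work such as \cite{MuscoMuscoTsourakakis:2018}.
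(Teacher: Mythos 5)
Your proof is correct. The paper states this as a Fact imported from prior work \cite{MuscoMuscoTsourakakis:2018} and gives no proof of its own; your derivation---verifying that $z$ inherits mean zero from $s$ via $L\mathbf{1}=0$ and the symmetry of $(I+L)^{-1}$, then expanding $\|z\|_2^2 = s^T(I+L)^{-2}s$---is exactly the standard verification (including the convention of dropping the $1/n$ normalization in the variance) and matches the intended argument.
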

	\begin{fact}[Disagreement]\label{def:d} Consider the setting of Fact \ref{def:p}. 
		The disagreement is given by $$D(L,s) \eqdef \sum_{i,j} w_{ij} \cdot (z(i)-z(j))^2 =  z^T L z = s^T (I+L)^{-1} L (I+L)^{-1} s.$$
	\end{fact}
	Due to its simple formulation as a quadratic form over $(I+L)^{-1}$, in some of our theoretical bounds, we  work with the \emph{polarization + disagreement}, introduced in \cite{MuscoMuscoTsourakakis:2018} and defined below. 
	\begin{fact}[Polarization + Disagreement]\label{def:pd} Consider the setting of Facts \ref{def:p} and \ref{def:d}. 
		The polarization + disagreement is given by
		\begin{align*}
			PD(L,s) \eqdef P(L,s) + D(L,s) = s^T (I+L)^{-1} s.
		\end{align*}
	\end{fact}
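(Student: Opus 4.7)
The plan is to derive the identity $PD(L,s) = s^T(I+L)^{-1}s$ purely by matrix algebra, plugging in the closed forms for $P(L,s)$ and $D(L,s)$ from Facts \ref{def:p} and \ref{def:d} and then factoring the resulting expression. Since $I$ and $L$ commute (both with each other and with $(I+L)^{-1}$), the computation is straightforward and does not require any spectral decomposition or probabilistic argument.

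Concretely, I would first write
\[
P(L,s) + D(L,s) = s^T (I+L)^{-2} s + s^T (I+L)^{-1} L (I+L)^{-1} s,
\]
and then factor out $(I+L)^{-1}$ on the left of the bracketed sum to obtain
\[
s^T (I+L)^{-1} \bigl[ (I+L)^{-1} + L (I+L)^{-1} \bigr] s.
\]
The key observation is that $(I+L)^{-1} + L(I+L)^{-1} = (I+L)(I+L)^{-1} = I$, so the middle bracket collapses and we are left with $s^T (I+L)^{-1} s$, which is exactly the desired expression. Equivalently, I could factor the inner matrix as $(I+L)^{-1}(I+L)(I+L)^{-1} = (I+L)^{-1}$ after combining the two terms into $(I+L)^{-1}(I+L)(I+L)^{-1}$.

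I do not anticipate any real obstacle: the only thing to be careful about is the order of multiplication when factoring (since in general $M$ and $L$ need not commute in such manipulations, though here they do because $L$ commutes with $(I+L)^{-1}$). Once the identity $(I+L)^{-1} + L(I+L)^{-1} = I$ is observed, the rest is a one-line cancellation and the claimed formula follows immediately.
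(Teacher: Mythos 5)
Your computation is correct: writing $P+D = s^T(I+L)^{-1}\bigl[(I+L)^{-1} + L(I+L)^{-1}\bigr]s$ and collapsing the bracket via $(I+L)(I+L)^{-1}=I$ gives exactly $s^T(I+L)^{-1}s$. The paper states this as a Fact without proof (citing prior work of Musco, Musco, and Tsourakakis), and your one-line factorization is precisely the standard derivation behind it, so there is nothing to add.
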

	From Fact \ref{def:p}, we can derive a simple upper bound on  polarization, which is useful in interpreting our simulation results. 
	\begin{proposition}\label{prop:var}
		Consider any Laplacian matrix $L \in \R^{n \times n}$, along with a mean $0$ innate opinion vector $s \in \R^n$. We have
		$P(L,s) \le \norm{s}_2^2.$
		I.e.,  polarization is bounded by the innate opinion variance. 
	\end{proposition}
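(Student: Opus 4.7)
The plan is to use the quadratic form expression $P(L,s) = s^T (I+L)^{-2} s$ from Fact \ref{def:p} together with standard spectral properties of the graph Laplacian. First I would recall that $L$ is a graph Laplacian and is therefore symmetric and positive semidefinite, so all of its eigenvalues satisfy $\lambda_i(L) \ge 0$. It follows that the eigenvalues of $I + L$ are $1 + \lambda_i(L) \ge 1$, and since $I + L$ is symmetric positive definite its inverse has eigenvalues $1/(1+\lambda_i(L)) \in (0, 1]$.

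Next I would square this to conclude that $(I+L)^{-2}$ has eigenvalues $1/(1+\lambda_i(L))^2 \in (0, 1]$, which means $(I+L)^{-2} \preceq I$ in the Loewner order. Plugging in $s$ on both sides gives
\[
P(L,s) = s^T (I+L)^{-2} s \le s^T I s = \|s\|_2^2,
\]
which is the desired bound.

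I do not anticipate any real obstacle here: the entire argument is a one-line spectral comparison, and the mean-zero assumption on $s$ is not even needed for this particular inequality (it is only needed so that $\|s\|_2^2$ is actually the variance of the innate opinions, justifying the phrasing "the polarization is upper bounded by the innate opinion variance"). The only thing worth being careful about is noting why $(I+L)$ is invertible, which follows immediately from the fact that its smallest eigenvalue is at least $1 > 0$.
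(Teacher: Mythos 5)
Your proof is correct and follows essentially the same route as the paper: both arguments use the quadratic form $P(L,s) = s^T(I+L)^{-2}s$ and the fact that $L \succeq 0$ forces the eigenvalues of $(I+L)^{-2}$ to lie in $(0,1]$ (the paper phrases the final step via the spectral norm bound $s^T(I+L)^{-2}s \le \|s\|_2^2 \cdot \|(I+L)^{-2}\|_2$, while you phrase it as $(I+L)^{-2} \preceq I$, which is the same fact). Your side remarks about invertibility and the role of the mean-zero assumption are accurate.
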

	\begin{proof}
		Since $L$ is positive semidefinite, all eigenvalues of $I+L$ are at least $1$. Thus, all eigenvalues of $(I+L)^{-2}$ are at most $1$. Thus, using Fact \ref{def:p}, $P(L,s) = s^T (I+L)^{-2} s \le \norm{s}_2^2.$ 
	\end{proof}
	
	In our setting, the innate opinions $s$ are fixed at initialization. Thus, we typically write the polarization, disagreement and polarization + disagreement at time $t$ simply as $P(L_t)$, $D(L_t)$, $PD(L_t)$.
	
	\vspace{-.5em}
	\section{Theoretical Results}\label{sec:theo}
	
	We now present our theoretical results, which help explain how polarization and disagreement evolve under local edge dynamics.	
	\subsection{Single Edge Updates}\label{sec:single}
	
	Using the expressions for polarization and disagreement in Sec. \ref{sec:pd}, we can understand how these quantities evolve as edges are added and removed from the graph. 
	We consider a simplified setting in which just a single edge $(u_t,v_t)$ is added or removed at a time. This can be represented by the update $L_{t+1} = L_{t} \pm E_t$ where $E_t \in \R^{n \times n}$ is a rank-$1$ edge Laplacian for  $(u_t,v_t)$. That is, $E_t(u_t,u_t) = E_t(v_t,v_t)  =1$ and $E_t(u_t,v_t) = E_t(v_t,u_t) = -1$. We can also write $E_t = \chi_{u_t,v_t} \chi_{u_t,v_t}^T$ where $\chi_{u_t,v_t} \in \R^{n}$ is the edge indicator vector with a $1$ at position $u_t$ and a $-1$ at position $v_t$.
	We only allow adding an edge not currently in the graph and removing one in the graph. This ensures that $L_t$ remains a valid unweighted graph Laplacian. 
	
	
	We compute the update in polarization + disagreement under edge additions/deletions via the Sherman-Morrison formula \cite{ShermanMorrison:1950}.
	\begin{lemma}[P+D -- Edge Addition/Delete]\label{lem:add} Consider any unweighted graph Laplacian $L \in \R^{n \times n}$, and let $s,z \in \R^{n}$ be the innate and expressed opinion vectors in the Friedkin-Johnson model. Let $E \in \R^{n \times n}$ be the edge Laplacian for edge $(u,v)$. Let $\delta= z(u)-z(v)$ and $r_{u,v} = \chi_{u,v}^T (I+L)^{-1} \chi_{u,v}$.
		\begin{align*}
			PD(L + E) = PD(L) - \delta^2/(1+ r_{u,v}). \\
			PD(L - E) = PD(L) + \delta^2/(1- r_{u,v}).
	\end{align*}
\end{lemma}
We defer the proof of Lem. \ref{lem:add} to the appendix.
Note that $r_{u,v} = \chi_{u,v}^T (I+L)^{-1} \chi_{u,v}$ is the \emph{effective resistance} between $(u,v)$ in the graph given by $L$ plus a small copy of the complete graph \cite{Spielman:2019}. It will be small if $(u,v)$ are algebraically well-connected. For any $L$ and $(u,v)$ we have  $r_{u,v} \ge 0$  and $r_{u,v} \le \norm{\chi_{u,v} }_2^2 \le 2$. If  $(u,v)$ is already in the graph (as in a deletion), $r_{u,v}  \in (0,1]$. This gives:
\begin{corollary}[P+D -- Edge Addition/Deletion Bounds] \label{cor:add}
	Consider the setting of Lemma \ref{lem:add}. We have: 
	$$
	PD(L) - \delta^2 \le PD(L+E) \le PD(L) - \delta^2/3.
	$$
	$$
	PD(L-E) \ge PD(L) + \delta^2.$$
\end{corollary}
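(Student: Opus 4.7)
The plan is to treat this corollary as an immediate consequence of Lemma \ref{lem:add}, by bounding the denominator $1 + r_{u,v}$ using the two observations stated just before the corollary: $0 \le r_{u,v} \le 2$. Both inequalities can be verified in a line or two from the definition $r_{u,v} = \chi_{u,v}^T (I+L)^{-1} \chi_{u,v}$, so the body of the proof is essentially a substitution.

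First I would justify the bounds on $r_{u,v}$. Non-negativity follows because $L$ is positive semidefinite, hence so is $(I+L)^{-1}$, so $\chi_{u,v}^T (I+L)^{-1} \chi_{u,v} \ge 0$. For the upper bound, all eigenvalues of $I+L$ are at least $1$, so $(I+L)^{-1} \preceq I$ in the Loewner order (this is the same step used in Proposition \ref{prop:var}). Consequently $r_{u,v} \le \chi_{u,v}^T \chi_{u,v} = \|\chi_{u,v}\|_2^2 = 2$, since $\chi_{u,v}$ has one entry equal to $+1$, one entry equal to $-1$, and all others zero.

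Next I would plug these two bounds into the exact identity from Lemma \ref{lem:add}, namely
\[
PD(L+E) \;=\; PD(L) - \frac{\delta^2}{1 + r_{u,v}}.
\]
From $r_{u,v} \ge 0$ we get $\frac{1}{1+r_{u,v}} \le 1$, hence $\frac{\delta^2}{1+r_{u,v}} \le \delta^2$, which rearranges to the lower bound $PD(L+E) \ge PD(L) - \delta^2$. From $r_{u,v} \le 2$ we get $\frac{1}{1+r_{u,v}} \ge \frac{1}{3}$, hence $\frac{\delta^2}{1+r_{u,v}} \ge \delta^2/3$, which rearranges to the upper bound $PD(L+E) \le PD(L) - \delta^2/3$.

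I do not anticipate any real obstacle: the only mildly non-trivial step is the Loewner-order argument for $r_{u,v} \le 2$, and even that is a restatement of the inequality already used in Proposition \ref{prop:var}. If one wanted to be slightly more careful, one could note that the inequality $r_{u,v} \le 2$ is not tight in general (on a connected graph $L$ has only one zero eigenvalue, and the edge indicator $\chi_{u,v}$ is orthogonal to the all-ones vector, so the bound can be sharpened), but the constant $1/3$ in the stated corollary uses only the weak bound $r_{u,v} \le 2$, so nothing more is required.
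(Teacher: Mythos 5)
Your proposal is correct and follows exactly the paper's route: the paper derives the corollary from Lemma \ref{lem:add} by noting $0 \le r_{u,v} \le \norm{\chi_{u,v}}_2^2 \le 2$ and substituting into the identity $PD(L+E) = PD(L) - \delta^2/(1+r_{u,v})$, which is precisely your argument. Your justification of the bounds on $r_{u,v}$ via the Loewner order is the intended (and correct) one-line verification.
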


From Cor. \ref{cor:add}, we see that adding an edge decreases the polarization + disagreement. Subtracting an edge increases it. In both cases, the magnitude of change is roughly linear in the squared disagreement across the edge. We highlight that, since the disagreement is in terms of the \emph{expressed opinions}, which may differ substantially from the innate opinions, this finding is non-obvious. It is surprising that the change in polarization + disagreement only depends on the innate opinions $s$ through the expressed opinions $z$.

\vspace{-.5em}
\subsection{Edge Swaps}\label{sec:swap}

Building on the above, we next consider how $PD(L)$ changes when a disagreeable edge is swapped out for a more agreeable one. Our main result is Cor. \ref{cor:main}:  if there is a sufficient gap in disagreement across pair $(i,j)$ and pair $(k,\ell)$, then removing edge $(i,j)$ and adding $(k,\ell)$ will strictly increase the polarization + disagreement. Thus, as long as there remain edges in the graph with higher disagreement than non-edges, swapping out disagreeable connections for agreeable ones will drive up $PD(L)$. 
This finding helps explain  why a combination of confirmation bias and friend-of-friend recommendations leads to a significant increase in  polarization.
\begin{lemma}[P+D -- Edge Swap]\label{lem:swap} Consider any unweighted graph Laplacian $L \in \R^{n \times n}$, and let $s,z \in \R^{n}$ be the innate and expressed opinion vectors in the Friedkin-Johnson model.  Let $E_1$ be the edge Laplacian for edge $(u_1,v_1)$ and $\delta_1= z(u_1)-z(v_1)$. Let $E_2$ be the edge Laplacian for edge $(u_2,v_2)$ and $\delta_2= z(u_2)-z(v_2)$. Assume that the edge $(u_2,v_2)$ is in the graph corresponding to $L$. Let $r_1 = \chi_{1}^T (I+L)^{-1} \chi_{1}$, $r_2 = \chi_{2}^T (I+L)^{-1} \chi_{2}$, and $r_{2,1} = \chi_2^T (L+I+E_1)^{-1}\chi_2$.
	\begin{align*}
		PD(L+E_1-E_2) &\ge PD(L)- \frac{\delta_1^2}{1 + r_1} + \frac{(\delta_2 - \alpha \cdot \delta_1)^2}{1-r_{2,1}}, 
	\end{align*}
	where $\alpha = \frac{\chi_1^T (L+I)^{-1} \chi_2}{1+r_1}$ and $|\alpha| \le \frac{\sqrt{r_1 \cdot r_2}}{1+r_1}$.
\end{lemma}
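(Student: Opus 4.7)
The plan is to combine Lemmas \ref{lem:add} and \ref{lem:sub} sequentially, treating the swap as first an edge addition of $E_1$ and then an edge deletion of $E_2$. The only genuine new calculation will be relating the expressed opinion vector of the intermediate graph (with Laplacian $L+E_1$) back to the original expressed opinion vector $z$, since Lemma \ref{lem:sub} is naturally applied with respect to disagreements under the intermediate Laplacian.

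First, I would apply Lemma \ref{lem:add} to $L$ with added edge $(u_1,v_1)$ to obtain
$$PD(L+E_1) = PD(L) - \frac{\delta_1^2}{1+r_1}.$$
Then I would apply Lemma \ref{lem:sub} to $L+E_1$ with removed edge $(u_2,v_2)$, which gives
$$PD(L+E_1-E_2) = PD(L+E_1) + \frac{(\delta_2')^2}{1-r_{2,1}},$$
where $\delta_2' \eqdef z'(u_2) - z'(v_2) = \chi_2^T z'$ and $z' \eqdef (I+L+E_1)^{-1}s$ is the expressed opinion vector in the intermediate graph.

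The key step is to rewrite $\delta_2'$ in terms of $\delta_1$, $\delta_2$, and the original inverse $(I+L)^{-1}$. Applying Sherman-Morrison to $(I+L+\chi_1\chi_1^T)^{-1}$, I would obtain
$$z' = (I+L)^{-1}s - \frac{(I+L)^{-1}\chi_1 \chi_1^T (I+L)^{-1}}{1+r_1}s = z - \frac{\chi_1^T z}{1+r_1}(I+L)^{-1}\chi_1 = z - \frac{\delta_1}{1+r_1}(I+L)^{-1}\chi_1,$$
using $\chi_1^T z = \delta_1$. Left-multiplying by $\chi_2^T$ gives $\delta_2' = \delta_2 - \alpha\,\delta_1$ with $\alpha = \chi_1^T(I+L)^{-1}\chi_2/(1+r_1)$, exactly as stated. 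Substituting this into the two-step expansion yields the main identity, which is in fact an equality; the stated $\geq$ is immediate.

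For the bound on $|\alpha|$, I would use the fact that $(I+L)^{-1}$ is positive definite, so $\langle x,y\rangle_{(I+L)^{-1}} \eqdef x^T(I+L)^{-1}y$ defines an inner product. Cauchy--Schwarz then yields $|\chi_1^T(I+L)^{-1}\chi_2| \le \sqrt{r_1\,r_2}$, giving $|\alpha| \le \sqrt{r_1 r_2}/(1+r_1)$.

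The main obstacle is just bookkeeping: carefully distinguishing the inverse with respect to $L$ from that with respect to $L+E_1$, and verifying that the Sherman-Morrison application produces $\alpha$ in the prescribed form with the correct scaling $1/(1+r_1)$. There is no technical difficulty beyond this once Lemmas \ref{lem:add} and \ref{lem:sub} are in hand.
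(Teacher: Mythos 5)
Your proposal is correct and follows essentially the same route as the paper: apply Lemma \ref{lem:add} then Lemma \ref{lem:sub}, use a third Sherman--Morrison application to relate the intermediate quantities back to $(I+L)^{-1}$, and bound $|\alpha|$ via Cauchy--Schwarz in the $(I+L)^{-1}$ inner product. The only (cosmetic) difference is that you track the intermediate expressed opinion vector $z'=(I+L+E_1)^{-1}s$ and read off $\delta_2'=\delta_2-\alpha\delta_1$ directly, whereas the paper expands the quadratic form $s^T(L+I+E_1)^{-1}E_2(L+I+E_1)^{-1}s$ into three terms and then recognizes the perfect square --- your bookkeeping is arguably the cleaner of the two.
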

We defer the proof of Lem. \ref{lem:swap} to the appendix. It follows by applying the two formulas of Lem. \ref{lem:add}  in sequence, and simplifying via a third application of the Sherman-Morrison formula.

In a well-connected graph,  $r_1, r_2 \ll 1$. 
hus, the increase in PD(L) will be roughly $\frac{\delta_2^2}{1-r_{2,1}} - \frac{\delta_1^2}{1+r_1} \ge \delta_2^2 - \delta_1^2$. I.e., adding a more agreeable edge and removing a more disagreeable one will increase PD. Formally, in the appendix we prove:
\begin{corollary}[PD Increase with Swap]\label{cor:main}
	Consider the setting of Lemma \ref{lem:swap}. If $|\delta_2| > \frac{3\sqrt{3}}{4} \cdot |\delta_1|$ then 
	$PD(L+E_1-E_2) > PD(L).$
\end{corollary}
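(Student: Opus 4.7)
The plan is to start from the lower bound provided by Lemma \ref{lem:swap}, namely
\begin{equation*}
PD(L+E_1-E_2) - PD(L) \;\ge\; \frac{(\delta_2 - \alpha \delta_1)^2}{1-r_{2,1}} - \frac{\delta_1^2}{1+r_1},
\end{equation*}
and to show that the right-hand side is strictly positive whenever $|\delta_2| > \frac{3\sqrt{3}}{4}|\delta_1|$. All the structural work is already absorbed into Lemma \ref{lem:swap}, so the whole proof reduces to controlling this one scalar inequality.

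The first simplification is to discard the denominator $1 - r_{2,1}$: since $(u_2,v_2)$ remains an edge of the graph corresponding to $L+E_1$, the same reasoning used for Corollary \ref{cor:sub} gives $r_{2,1} \in (0,1)$, and hence $\frac{1}{1-r_{2,1}} > 1$. It therefore suffices to prove $(\delta_2 - \alpha \delta_1)^2 \ge \delta_1^2/(1+r_1)$, which after taking square roots and applying the reverse triangle inequality becomes
\begin{equation*}
|\delta_2| - |\alpha|\,|\delta_1| \;\ge\; \frac{|\delta_1|}{\sqrt{1+r_1}}.
\end{equation*}
Plugging in the bound $|\alpha| \le \sqrt{r_1 r_2}/(1+r_1)$ from Lemma \ref{lem:swap}, and noting that $r_2 \in (0,1]$ since $(u_2,v_2)$ is in the graph for $L$, the problem reduces to establishing
\begin{equation*}
|\delta_2| \;>\; |\delta_1| \cdot h(r_1), \qquad h(r_1) \eqdef \frac{\sqrt{r_1}}{1+r_1} + \frac{1}{\sqrt{1+r_1}},
\end{equation*}
uniformly over $r_1 \in [0,2]$ (since the coefficient is monotone in $r_2$, the worst case is $r_2 = 1$).

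The crux is then the single-variable optimization $\max_{r_1 \in [0,2]} h(r_1)$, which I expect to be the main obstacle. Setting $h'(r_1)=0$ and clearing radicals leads to the critical equation $(1-r_1)^2 = r_1(1+r_1)$, whose unique solution in $[0,2]$ is $r_1 = 1/3$. Evaluating, $h(1/3) = \frac{\sqrt{3}}{4} + \frac{\sqrt{3}}{2} = \frac{3\sqrt{3}}{4}$, while the boundary values $h(0) = 1$ and $h(2) = \sqrt{2}/3 + 1/\sqrt{3} \approx 1.05$ are strictly smaller, so $\frac{3\sqrt{3}}{4}$ is the global maximum. A naive application using only $|\alpha| \le 1/2$ would have produced the looser constant $3/2$; the tightness of $\frac{3\sqrt{3}}{4}$ comes from coupling the bounds on $|\alpha|$ and on $(1+r_1)^{-1/2}$ jointly in $r_1$. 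Finally, the bound $|\alpha| \le 1/2 < \frac{3\sqrt{3}}{4}$ justifies the reverse-triangle-inequality step, and the strict hypothesis $|\delta_2| > \frac{3\sqrt{3}}{4}|\delta_1|$ propagates through to the strict conclusion $PD(L+E_1-E_2) > PD(L)$.
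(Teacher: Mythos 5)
Your proposal is correct and takes essentially the same route as the paper's proof: drop the $1-r_{2,1}$ denominator using $r_{2,1}<1$, bound $|\alpha|$ by $\sqrt{r_1}/(1+r_1)$ via $r_2\le 1$, and maximize $\gamma(r_1)=\frac{\sqrt{r_1}}{1+r_1}+\frac{1}{\sqrt{1+r_1}}$, which attains $\frac{3\sqrt{3}}{4}$ at $r_1=1/3$. The only difference is that you spell out the critical-point computation and boundary checks that the paper asserts without detail.
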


\vspace{-.5em}
\subsection{Stochastic Block Model Analysis}\label{sec:sbm}

The results of Secs \ref{sec:single} and \ref{sec:swap} help explain how polarization and disagreement evolve incrementally as edges are added and removed. As discussed, we find that the evolution of these quantities at larger time scales can be well-approximated by looking simply at connectivity across opinion groups. In particular, we  approximate $L_t$ with an expected Stochastic Block Model (SBM) graph with the same in-group and out-group connectivity. We approximate the innate opinion vector $s$ with a discretization of that vector that is constant on each opinion group. Following work of \cite{McSherry:2001, ChitraMusco:2020}, we can compute closed form expressions for the polarization and disagreement for this SBM graph and innate opinion vector. We find that these  approximations closely match empirical observations -- see Fig. \ref{fig:fixedIntro}.

We start by formally defining our SBM approximation. For simplicity, we assume that the number of nodes $n$ is even -- odd $n$ is easily handled via rounding. For two vertex sets $A,B$, we let $E(A,B)$ denote the number of edges between these sets in the graph.

\begin{definition}[SBM Approximation]\label{def:sbm} Consider a graph Laplacian $L \in \R^{n \times n}$ and innate opinion vector $s \in [-1,1]^n$. Let $V_+$ and $V_-$ be the vertex sets corresponding to nodes with positive and negative opinions, respectively.
Let $q = \frac{E(V_+,V_-)}{|V_+| \cdot |V_-|/2}$ and $p = \frac{E(V_+,V_+) + E(V_-,V_-)}{(|V_+|^2 + |V_-|^2)/2}$ be the fraction of out-group and in-group edges, respectively. 

Let $\bar L$, be expected SBM graph with out-group and in-group connection probabilities $q,p$. In particular, $\bar L = \frac{(p+q)n}{2} I - \bar A$ where  $\bar A(i,j) = \bar A(j,i) = p$ if $i,j \in [1, n/2 ] $ or $i,j \in [n/2+1, n]$. $\bar A(i,j) = \bar A(j,i) = q$ if $i \in [1,n/2]$ and $j \in [n/2,n]$.
Let $\bar s$ have $\bar s(i) = 1$ for $i \in [1,n/2]$ and $\bar s(i) = -1$ for $i \in [n/2+1,n]$.
\end{definition}
Observe that opinion groups in the expected SBM graph $\bar L$ correspond to the first and last $n/2$ nodes. If $s$ is chosen randomly according to a symmetric distribution,  $|V_+| \approx |V_-| \approx n/2$. Thus, it is reasonable to approximate both groups as having size exactly $n/2$. Also observe that $\bar A$ is simply a $2 x 2$ block matrix, with its top-left and bottom-right $n/2 \times n/2$ blocks filled with $p$'s and its top-right and bottom-left blocks filled with $q$'s. We can check that $\bar s$ is an eigenvector of $\bar A$ with eigenvalue $ \frac{(p-q)n}{2}$. Thus, $\bar s$ is an eigenvector of $\bar L = \frac{(p+q)n}{2} I - \bar A$ with eigenvalue $qn$. Using this, we  derive:
\begin{fact}[SBM Polarization and Disagreement]\label{fact:sbm}
Let $\bar L \in \R^{n \times n}$ and $\bar s \in \R^n$ be as defined in Def. \ref{def:sbm}. We have:
	\begin{align*}		
	 P(\bar L, \bar{s}) = \frac{n}{(qn + 1)^2}\hspace{1em}\text{ and }\hspace{1em}
		D(\bar L, \bar{s}) = \frac{qn^2}{(qn + 1)^2}.
	\end{align*}
\end{fact}

\begin{proof}
	Using Facts \ref{def:p} and \ref{def:d}, $P(\bar L, \bar s) = \bar s^T (I + \bar L)^{-2} \bar s$ and $D(\bar L,\bar s) = \bar s^T (I+\bar L)^{-1} \bar L (I+\bar L)^{-1} \bar s$.  Since $\bar s$ is an eigenvector of $\bar L$ with eigenvalue $qn$, it is an eigenvector of both  $(I + \bar L)^{-2}$ and $(I+\bar L)^{-1} \bar L (I+\bar L)^{-1}$ with eigenvalues $\frac{1}{(qn+1)^2}$ and $\frac{qn}{(qn+1)^2}$, respectively. The fact then follows from observing that $\norm{\bar s}_2^2 = n$.
\end{proof}

Via Fact \ref{fact:sbm}, we can approximate to polarization and disagreement of a graph $L$ and innate opinion vector $s$ with a simple formula that depends just on the out-group connection density $q$. Surprisingly, the formula \emph{has no dependence} on the in-group density $p$. Since we always have $\norm{\bar s}_2^2 = n$, we scale the quantities in Fact \ref{fact:sbm} by a factor of $\frac{\norm{s}_2^2}{n}$ to adjust for the innate opinion variance. The resulting approximations are quite accurate in predicting  polarization and disagreement evolution in our model -- see Fig. \ref{fig:fixedIntro}.  

\smallskip

\noindent{\textbf{Interpretation.} 
The accuracy of our SBM-based approximation indicates that the evolution of polarization and disagreement in our edge dynamics model is largely governed by the evolution of out-group connection probabilities. We observe that a combination of confirmation bias and friend-of-friend recommendations tends to drive down $q$ over time and hence drive up polarization. When either dynamic is removed, $q$ remains relatively high, and polarization does not rise significantly. See Fig. \ref{fig:compIntro2}

Interestingly, our disagreement approximation $D(\bar L, \bar{s}) = \frac{qn^2}{(qn + 1)^2}$ is not a monotonic function of $q$. $D(\bar L, \bar{s}) $ is decreasing in $q$ when $q \ge 1/n$, but increasing when $q \le 1/n$. This explains an interesting phenomena seen in our model with no fixed edges: the disagreement initially `spikes' as out-group connections are removed, before falling to near $0$, as the opinion groups become fully disconnected (i.e., as $q$ becomes very small). When we fix a $\gamma$  fraction of fixed edges, $q$ is effectively lower bounded by $\gamma$. As long as $\gamma \ge 1/n$, we never see disagreement fall -- it rises jointly with polarization. See Fig. \ref{fig:fixedIntro}.
Relatedly, when $q$ is not lower bounded by $\gamma$, $P(\bar L, \bar s)$ is able to reach its maximum value of $n$, giving a predicted polarization of $\norm{s}_2^2$, which matches the maximum  polarization  derived in Prop. \ref{prop:var}. 

	
	
%

\begin{figure}[h]
	\vspace{-1em}
	\minipage{0.45\textwidth}
	\includegraphics[width=\linewidth]{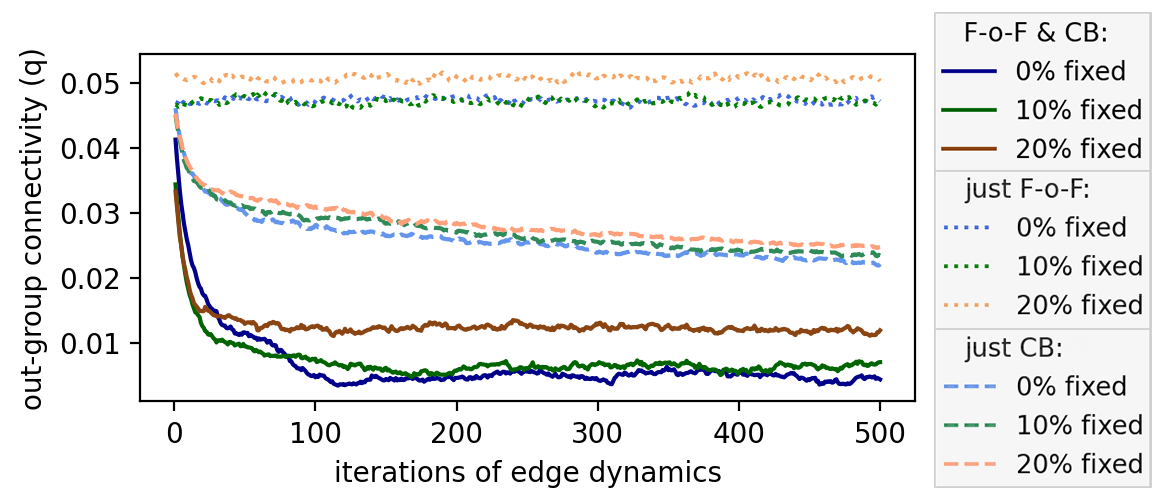}
	\endminipage
	\vspace{-1em}
	\caption{Out-group connectivity $q$ over time for Erd\"{o}s-Renyi graphs with $1000$ nodes and connection probability $p = 0.05$, with varying percentages of fixed edges.  Innate opinions are uniformly distributed in $[-1,1]$. With edge dynamics influenced by both \textit{confirmation bias} (CB) and \textit{friend-of-friend recommendations} (F-o-F), out-group connectivity drops significantly, mirroring a rise in polarization, as predicted by  Fact \ref{fact:sbm} and observed in Figs. \ref{fig:compIntro} and \ref{fig:fixedIntro}. 
		\vspace{-1em}}\label{fig:compIntro2}
\end{figure}

	\section{Experimental Results}\label{sec:exp}
	
	We next discuss our experimental results simulating our opinion and edge dynamics model on synthetic and real world data.
	
	\subsection{Experimental Set Up}\label{sec:expsetup}
	
	\noindent\textbf{Synthetic Networks.}
	We simulate our model on several random graphs, drawn from the Erd\"{o}s-Renyi (ER) and Barab\'{a}si-Albert (BA) distributions. ER graphs have each pair of nodes is connected independently with probability $p$. We modulate this probability in our experiments to study the effects of network density. BA graphs are scale-free networks generated via a preferential attachment process. For these graphs, we modulate the primary parameter $m$, which determines the number of edges to attach from a new node to existing nodes as the graph is generated.
	
	\smallskip
	
	\noindent\textbf{Real-world Networks.}
	We also simulate our model on several real-world networks. We preprocess all networks by computing their 2-core, which is a maximal subgraph such that each node has degree at least two. This eliminates the large number of single-edge nodes present in some of these datasets, which would distort the results.  Basic graph parameters (after preprocessing) are listed below, with more details on the datasets given in Appendix \ref{app:data}.
	\begin{itemize}
		\item \textbf{Reddit }\cite{DeAbir:2014} \;\; \textit{$n = 546$ nodes, $e = 8962$ edges}
		\item \textbf{Twitter }\cite{DeAbir:2014} \;\; \textit{$n = 531$ nodes, $e = 3621$ edges}
		\item \textbf{Facebook }\cite{LeskovecJureMcauley:2012} \;\; \textit{$n = 3964$ nodes, $e = 88159$ edges}
	\end{itemize}
	%
	
	\smallskip
	
	\noindent\textbf{Innate Opinions.} In our primary experiments, innate opinions are sampled uniformly at random from the interval $[-1,1]$.  In the appendix we also give results for a bimodal truncated Gaussian innate opinion distribution, observing similar behavior.
	
	\smallskip
	
	\noindent{\textbf{Trials.} For all experiments, we report the average behavior over five independent trials.
	
	
	
	

\subsection{Drivers of Polarization}\label{sec:happens}

As discussed, our edge dynamics with both confirmation bias and friend-of-friend recommendations drive a large increase in polarization. 
Without fixed edges, this finding is robust to the initial graph size and structure, and to the innate opinion distribution. See Figs. \ref{fig:sizes} and \ref{fig:fixed2} in the appendix, which show similar behavior for large random graphs and real social network graphs, respectively. 
See Fig. \ref{fig:bimodalPL} for an illustration with a bimodal  innate opinion distribution.  When fixed edges are present, they limit and slow the increase in polarization -- this effect is amplified for larger fractions of fixed edges.

Isolating the effects of friend-of-friend recommendations and confirmation bias, we find that replacing either with a random control drastically changes the behavior of polarization.  See Fig. \ref{fig:compIntro}
for an illustration on an ER random graph with 1000 nodes, the Twitter real-world data set \cite{DeAbir:2014}, and a dense ER random graph with 1000 nodes.
Polarization is essentially constant with just friend-of-friend recommendations, and it rises at a very slow rate with just confirmation bias.  Note that polarization in the dense ER graph does not rise significantly, even with both friend-of-friend recommendations and confirmation bias. We discuss this finding  below.

\begin{figure}[h]
	\vspace{-0.5em}
	\minipage{0.47\textwidth}
	\includegraphics[width=\linewidth]{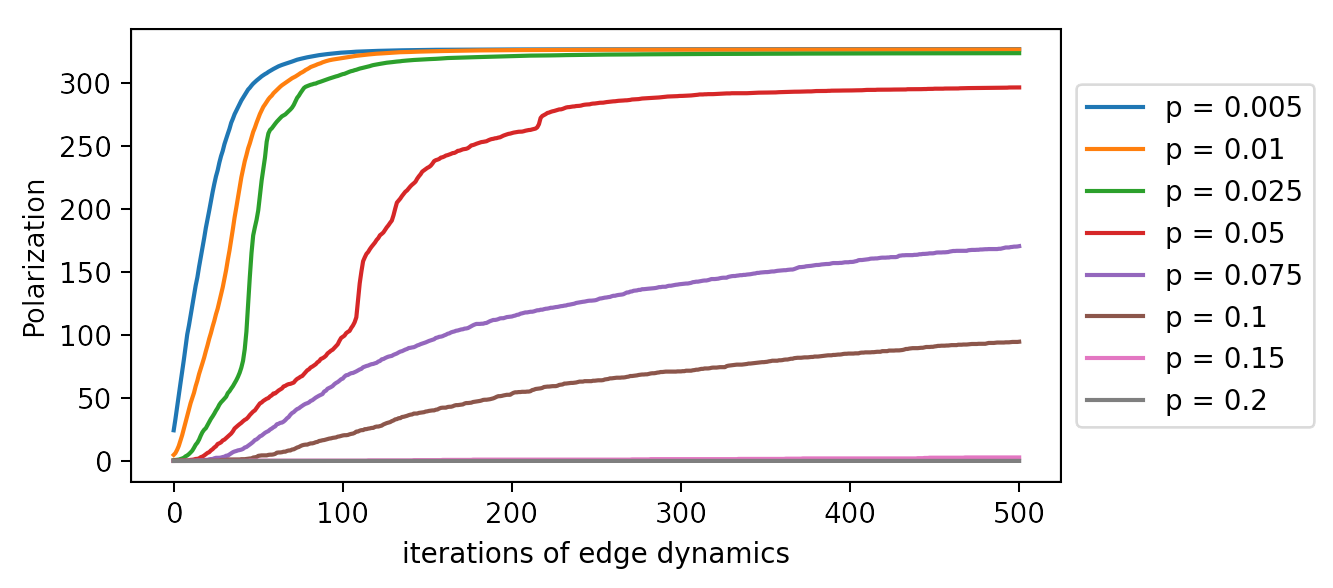}
	\endminipage
	\vspace{-0.5em}
	\caption{Polarization for ER random graphs with $1000$ nodes, no fixed edges, and varied connection probabilities. Dynamics include both friend-of-friend recommendations and confirmation bias. We see polarization start to decrease once the density passes a high enough threshold, near $1 / \sqrt{n} \approx 0.03$.}\label{fig:exp3}
	\vspace{-1.5em}
\end{figure}

\subsubsection{Effect of Density}\label{sec:density}
For particularly dense graphs, the asymptotic behavior of our model with both friend-of-friend recommendations and confirmation bias changes, as observed in Fig. \ref{fig:exp3}. 
We see that denser graphs exhibit lower polarization when the simulation is complete, and that polarization also increases at a slower rate.  

For ER graphs, a change in behavior seems to occur roughly when the average degree $d_{avg} = pn$ is large enough such that the friends-of-friends set with size roughly $d_{avg}^2$ encompasses nearly all nodes. I.e., when $d_{avg}^2 = p^2 n^2 \gtrsim n$, and so $p \gtrsim 1/\sqrt{n}$. In this case, friend-of-friend recommendations at initialization are essentially uniformly random, mirroring our control setting. A more precise theoretical understanding of this behavior would be very interesting and would help clarify the importance of friend-of-friend recommendations in amplifying the polarizing effects of confirmation bias.


\vspace{-.5em}
\subsection{Evolution of Polarization \& Disagreement}\label{sec:evolves}

For both synthetic graphs and real-world networks, using friend-of-friend recommendations and confirmation bias, we find that expressed opinions evolve through roughly three distinct states.  
\smallskip

\noindent\textbf{Initial state:} Opinions converge to be very close to the mean of the innate opinions -- which is $\approx 0$ in our setting. 
Polarization and disagreement are low.  The network's connectivity within similar opinions and between differing opinions is roughly equal.

\smallskip

\noindent\textbf{Bimodal polarization:} Expressed opinions bifurcate, "pulling apart" into a few distinct clusters on either side of the mean.  The network's connectivity is strengthening between similar opinions, and eroding between differing opinions (see Fig. \ref{fig:compIntro2}).  Both polarization and disagreement rise, with disagreement reaching a steady state if the network has fixed edges and a peak if it does not, as predicted by the SBM analysis of Sec. \ref{sec:sbm}.  

\smallskip

\noindent\textbf{Maximal polarization:} If fixed edges are not present in the network, the edge dynamics finally ``splinter'' the network into many components.  Out-group connectivity $q$ is near zero.  Polarization is near maximal as defined in Prop. \ref{prop:var} and disagreement is near zero, as predicted by the SBM analysis of Sec \ref{sec:sbm}.  Nodes cluster into a structure which keeps expressed opinions very close to innate opinions -- in the appendix, Fig. \ref{fig:mse} illustrates this by plotting the mean squared error between expressed opinions and innate opinions over time for a few experiment settings.

\smallskip 


\begin{figure}[h]
	\minipage{0.4\textwidth}
	\vspace{-1em}
	\includegraphics[width=\linewidth]{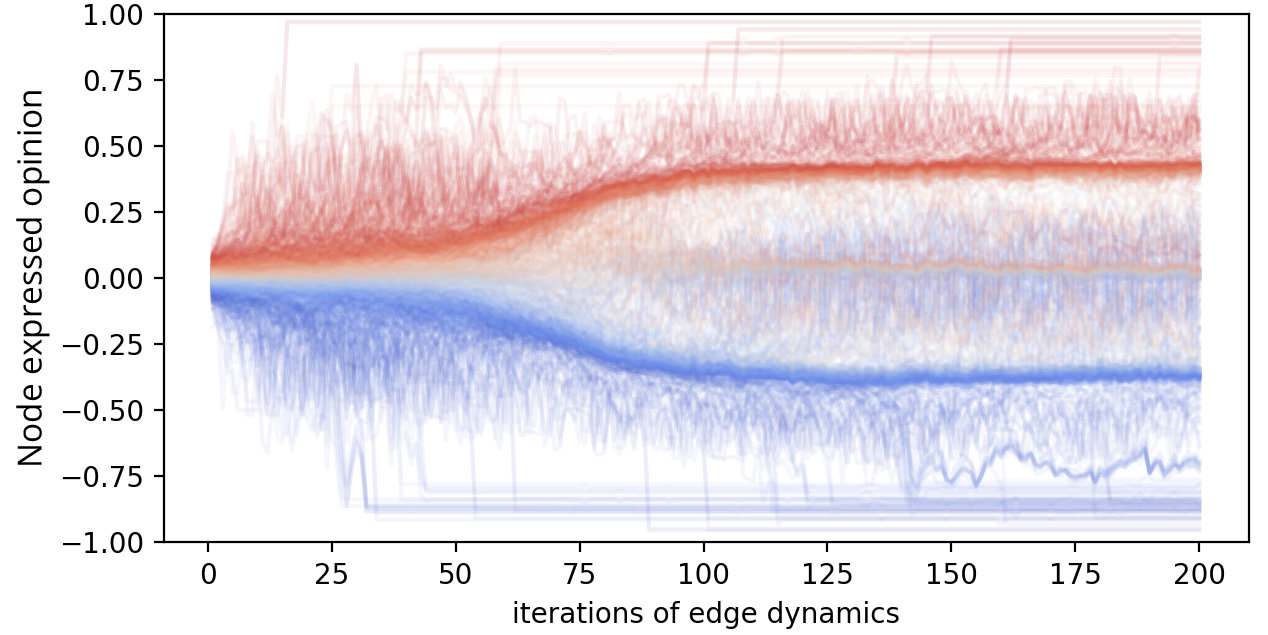}
	\endminipage
	\caption{Evolution of opinions on a Barab\'{a}si-Albert graph with $1000$ nodes, 5\% fixed edges, and uniform random innate opinions.  Each line represents the evolving expressed opinion of one node. The color of each  line represents that node's innate opinion, ranging on a gradient from -1 (blue) to 1 (red). 
		We see opinions initially grouped near $0$ (\textbf{Initial state}). %
		We then see a bifurcation of the major opinion clusters, indicated  by the red and blue ``branches'' (\textbf{Bimodal polarization}).  The fixed edges in the network prevent it from fully splintering, so nodes are ``stuck'' in this bimodal state.}\label{fig:branch}
	\vspace{-1em}
\end{figure}

In Figure \ref{fig:branch}, we show this progression of opinion states for a Barab\'{a}si-Albert graph with 5\% fixed edges.  
In Fig. \ref{fig:branch2} in the appendix, we show a similar plot for a network with no fixed edges, which continues onto the maximal polarization stage.  Appendix Figure \ref{fig:hist1} also shows histograms of node opinions which provide another visual dimension for this state evolution.   

\begin{figure*}[h]
	\vspace{-1em}
	\minipage{0.32\textwidth}
	\includegraphics[width=\linewidth]{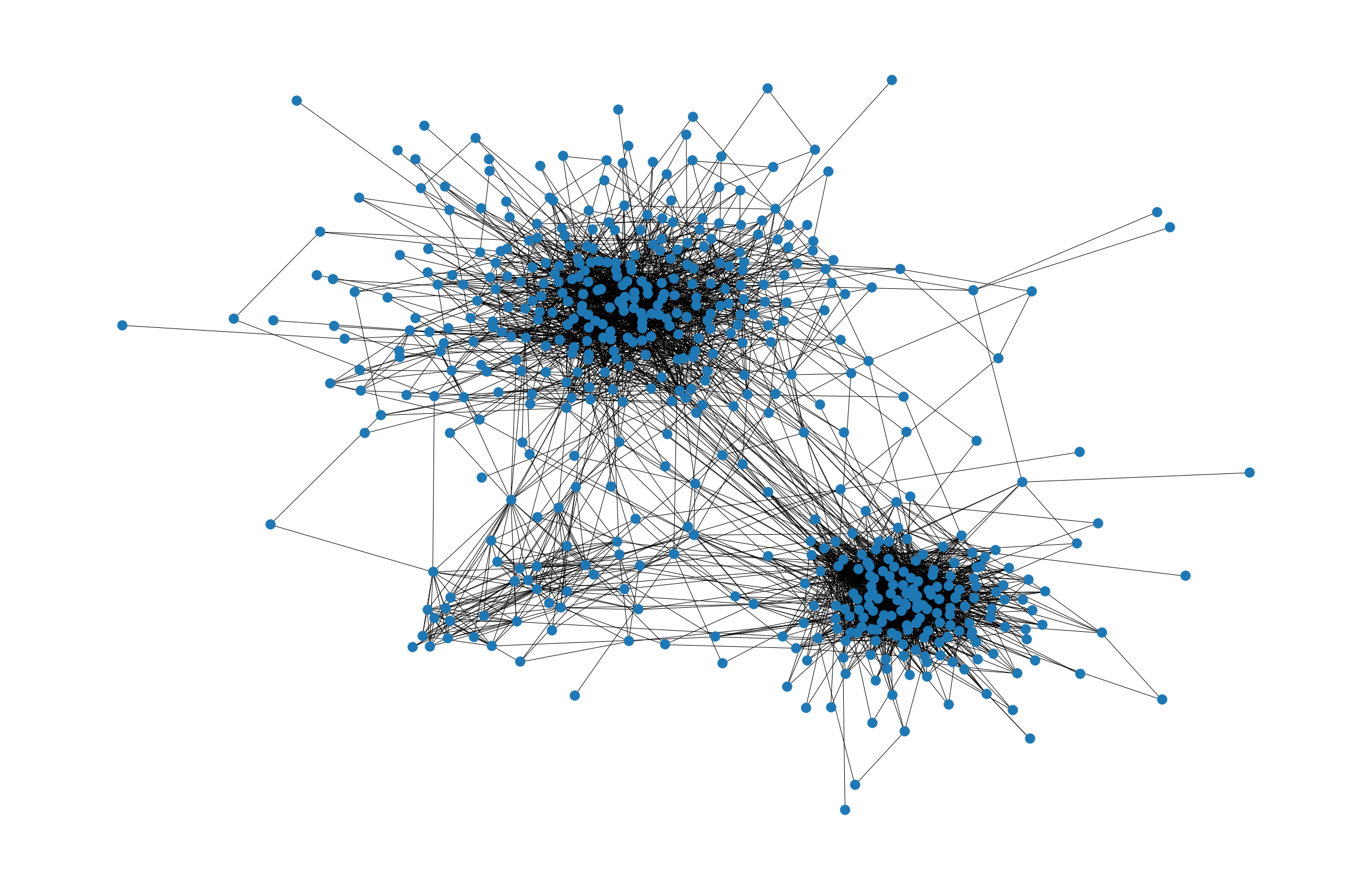}\vspace{-1em}
	\endminipage\hfill
	\minipage{0.32\textwidth}%
	\includegraphics[width=\linewidth]{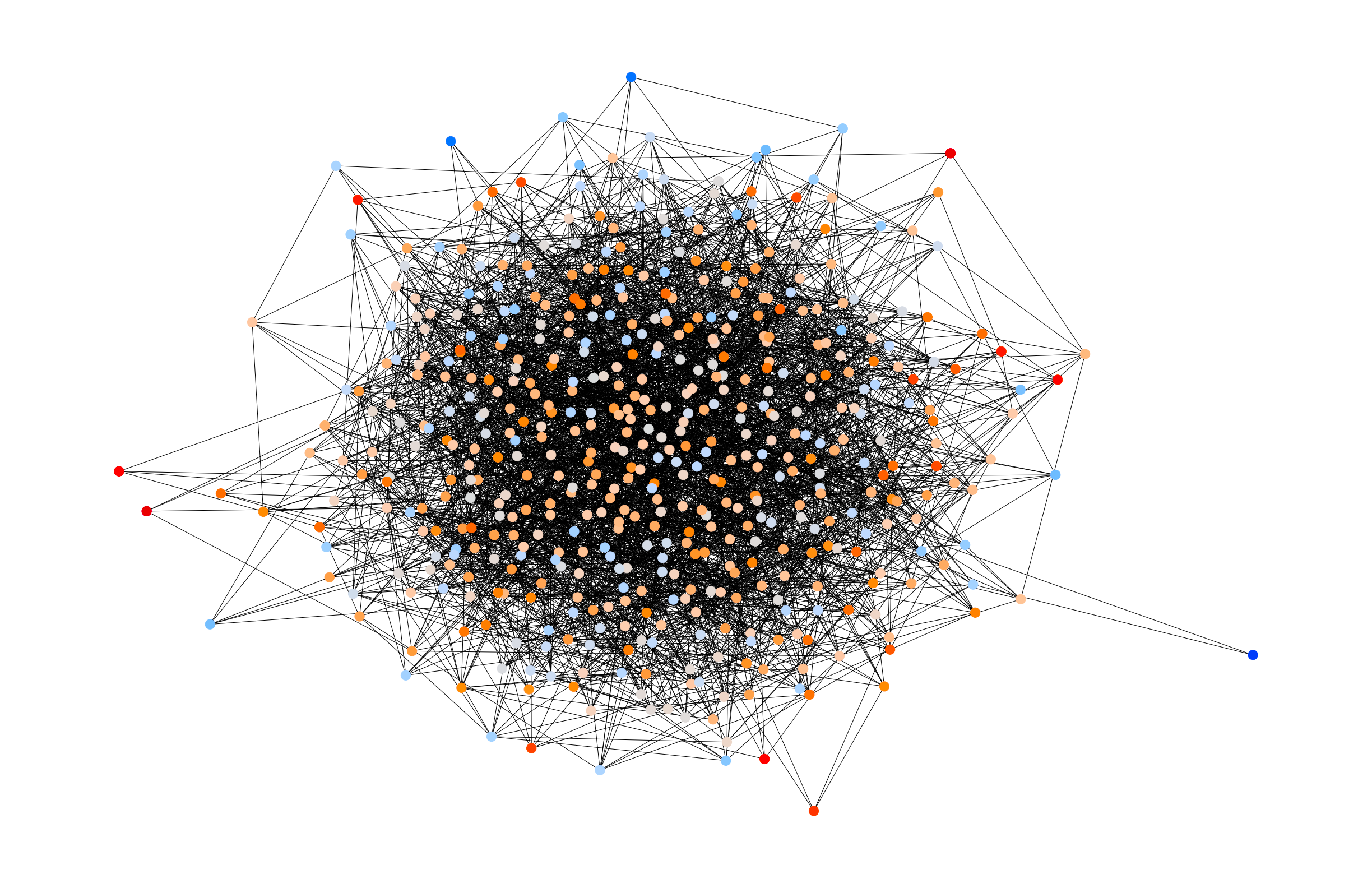}\vspace{-1em}
	\endminipage\hfill
		\minipage{0.32\textwidth}
	\includegraphics[width=\linewidth]{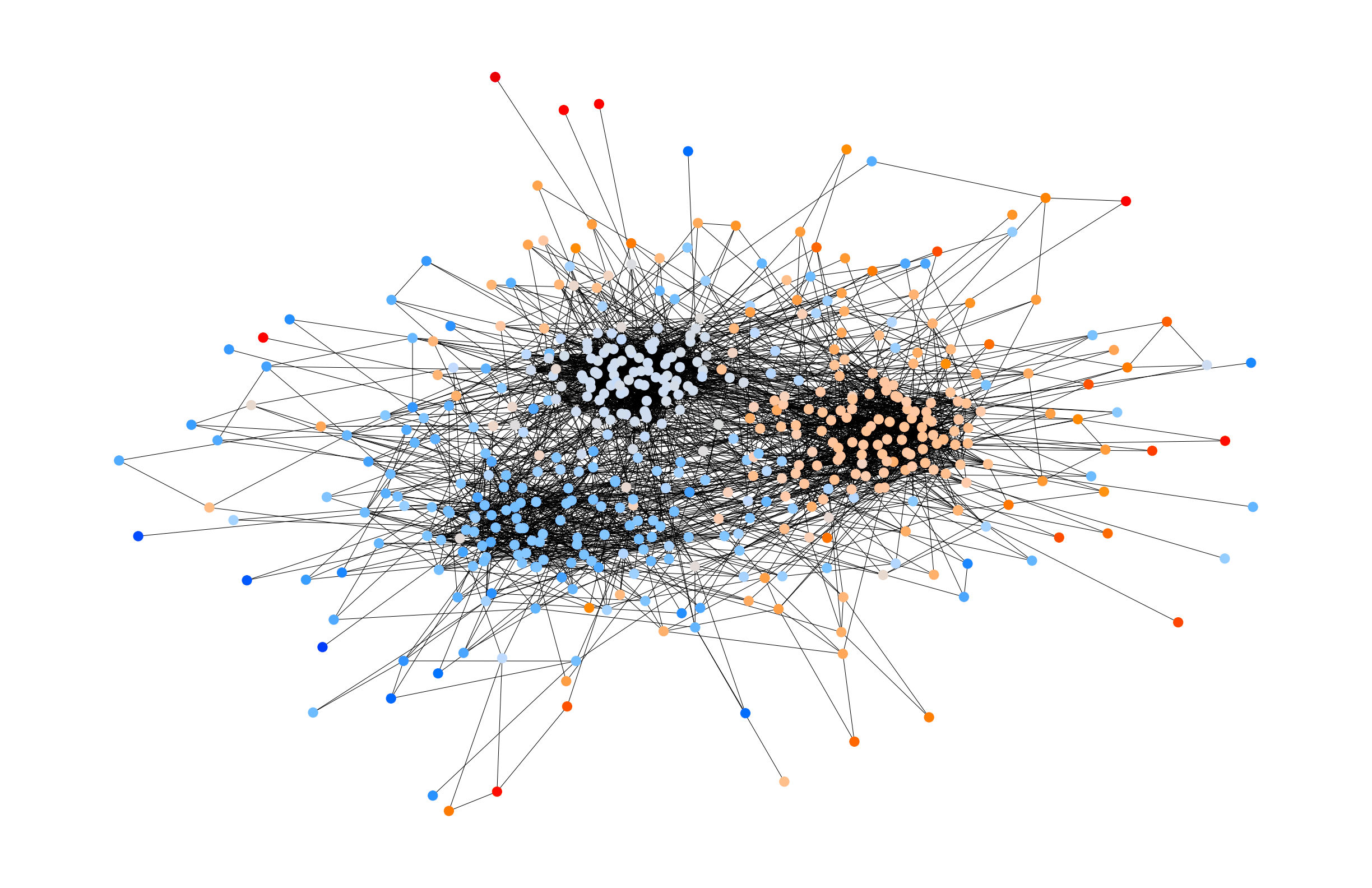}\vspace{-1em}
	\endminipage\hfill
	\caption{\textit{(left)}: Twitter \cite{DeAbir:2014} real-world snapshot. \textit{(middle)}: \textit{Initial} ER graph with 25\% fixed edges. \textit{(right)}: \textit{Steady state} ER graph with 25\% fixed edges, after edge dynamics simulation.}\label{fig:twitDelhi}
	\vspace{-.5em}
\end{figure*}

\vspace{-.5em}
\subsection{Validation Against Real-World Data}\label{sec:valid}

Following prior work \cite{SasaharaChenPeng:2020, EvansFu:2018}, in this section, we use social network datasets as a ``benchmark'' to ascertain whether our synthetic model can recreate realistic social network structures. We show validation against the Twitter data set, with results on the Facebook data set appearing in the appendix.

\smallskip

\noindent{\textbf{Set up.} 
We start with a synthetic graph (ER or BA) with the same number of nodes as the real-world network, and roughly the same number of edges.
%
We fix a certain percentage of edges (e.g., 15\%, 25\%, and 35\%), which are not subject to our edge dynamics. 

We then simulate our edge dynamics with friend-of-friend recommendations and confirmation bias, running for $1500$ iterations, until the network reaches a steady state. We measure the evolution of various structural properties of the network over time, and compare them to those of the real-world network. 

We find that the percentage of fixed edges in a graph correlates with the steady state \textit{global clustering coefficient}, which is defined as 
$\frac{3 \times \text{number of triangles in } G}{\text{number of open \& closed triads in } G}$.
Using this quantity, we tune the percentage of fixed edges in our synthetic graph.  E.g., for the Twitter network, the global clustering coefficient is $\approx 0.227$. The percentage of fixed edges leading to the closest steady-state clustering coefficient for both ER and BA  networks is $25\%$.

\smallskip 

\noindent\textbf{Degree Distribution.}
As in many social networks, Twitter has a power law degree distribution  \cite{Muchnik:2013}. See Fig. \ref{fig:twitDeg} in the appendix. 
For an ER  network with $25\%$ fixed edges, we show the initial degree distribution, along with the steady state degree distribution after running edge dynamics in Fig. \ref{fig:ERDeg}.  The initial ER graph has a bell-shaped binomial degree distribution. Surprisingly, our edge dynamics change this distribution significantly, and the steady state distribution appears closer to a  power law distribution. 

In Fig. \ref{fig:opinionDeg}, we compare each node's equilibrium degree against its expressed opinion in the steady state network.  It seems that nodes with near-mean expressed opinions \textit{gain} connections under our edge dynamics model, while peripheral nodes loose connections.  This behavior may drive the observed degree distribution shift. An improved theoretical understanding would be very interesting.

We show similar results for a BA network with $25\%$ fixed edges in the appendix,  Fig. \ref{fig:BAvalidDeg}.  Note that the BA graph  initially has a power-law degree distribution, so these results serve to validate that our model \emph{preserves} a realistic degree distribution. 

\begin{figure}[h]
	\minipage{0.4\textwidth}
	\vspace{-.5em}
	\includegraphics[width=\linewidth]{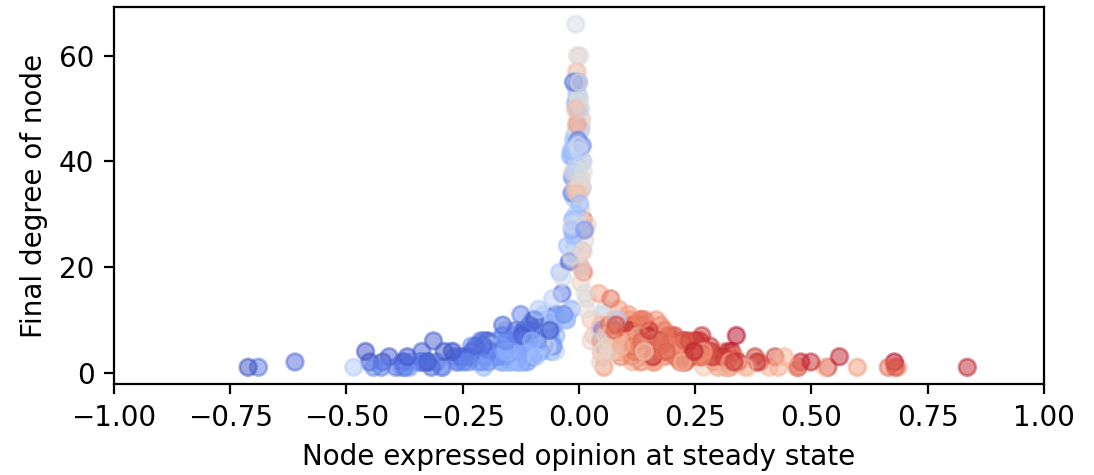}
	\endminipage
	\vspace{-1em}
	\caption{Scatter plot comparing expressed opinions and degree for each node in the steady state ER graph with 25\% fixed edges.  The color of each circle corresponds to the node's innate opinion, ranging on a gradient from -1 (blue) to 1 (red).}\label{fig:opinionDeg}
	\vspace{-1em}
\end{figure}


\smallskip

\noindent\textbf{Triangle Distribution.}
Similar to the degree distribution, in the Twitter network, the triangle distribution is power law like -- see Fig. \ref{fig:twitTri}. In Fig. \ref{fig:ERTri}, we show that our edge dynamics drive the triangle distribution of an initial ER graph to be much closer to this power law distribution. 
%
%
%
We show results for the BA generated network with $25\%$ fixed edges in the Appendix, in Fig. \ref{fig:BAvalidTri}.  Again, in this case, our dynamics preserve the already relatively realistic triangle distribution of the initial BA graph. 

\begin{figure}[h]
	\minipage{0.4\textwidth}
	\includegraphics[width=\linewidth]{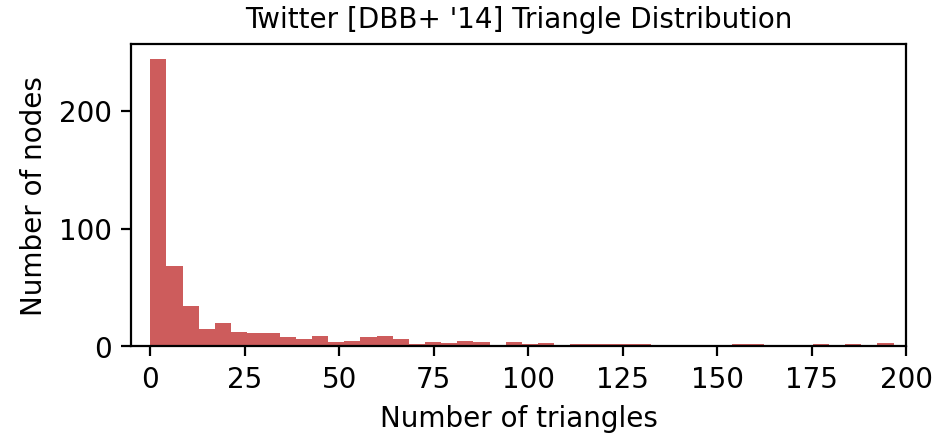}
	\endminipage
	\vspace{-1em}
	\caption{Triangle distribution for Twitter data set \cite{DeAbir:2014}.  
	}\label{fig:twitTri}
\end{figure}

\begin{figure}[h]
	\minipage{0.4\textwidth}
	\vspace{-1em}
	\includegraphics[width=\linewidth]{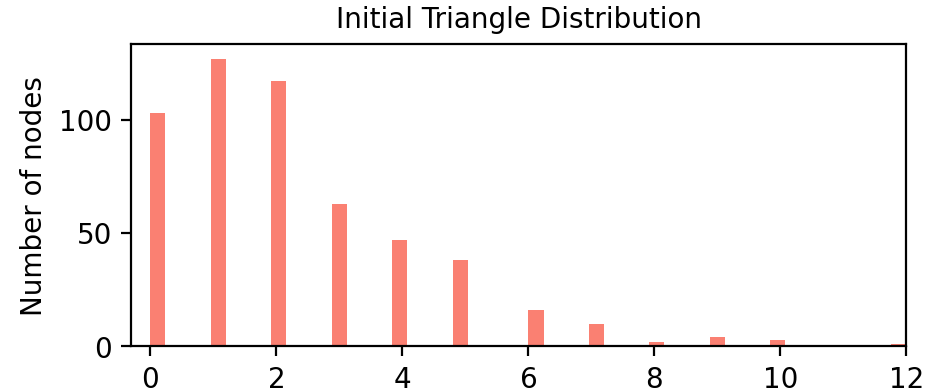}
	\endminipage\\
	\minipage{0.4\textwidth}
	\includegraphics[width=\linewidth]{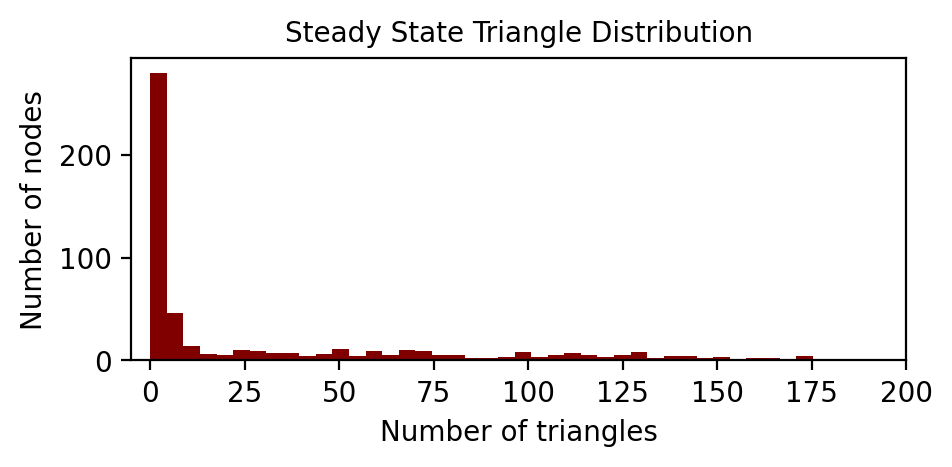}
	\endminipage
	\vspace{-1em}
	\caption{Initial and steady state triangle distributions for an ER network, subject to our edge dynamics with $25\%$ fixed edges. Notably, the steady state distribution appears closer to the Twitter network's triangle distribution (Fig. \ref{fig:twitTri}).}\label{fig:ERTri}
	\vspace{-1em}
\end{figure}

\smallskip

\noindent{\textbf{Visual Similarity.}}
While not rigorous as a method of comparison, in Fig. \ref{fig:twitDelhi} we visualize the Twitter network, an initial ER graph, and the steady-state ER graph after undergoing our edge dynamics. 
The color of each node is set according to the node's Friedkin-Johnsen innate opinion, which is uniformly sampled from the interval $[-1, 1]$.  
%
%
While the ER network's steady state connectivity between clusters seems to be more dense than the Twitter network, the overall structure appears similar, particularly around the periphery nodes and in-group clusters.  In the Appendix, in Figs. \ref{fig:baInit} \& \ref{fig:baConverged}, we show a similar visual evolution for a BA network.

\vspace{-1em}
\section{Conclusion}

We  present a simple extension of the Friedkin-Johnsen opinion model, in which the opinions and the underlying network coevolve under the influence of confirmation bias and recommendations.  Via simulation, we find that \emph{both} confirmation bias and friend-of-friend recommendations are required to noticeably increase polarization.  We show theoretical results that explain how polarization and disagreement are increased via swaps of more agreeable edges for more disagreeable ones. We also show how polarization and disagreement can be accurate approximated as functions of connectivity between opinion groups, based on a stochastic block model analysis. Finally, we validate that our opinion dynamics tend to create relatively realistic looking networks in terms of structure. 

Our findings leave open several questions.  Theoretically, better explaining the role of density in our model, and the mechanisms behind degree and triangle distribution shift would be very interesting. One also might consider whether our findings generalize to variations of the Friedkin-Johnsen model or other opinion models \cite{Degroot:1974,HegselmannKrauseothers:2002, EvansFu:2018, SasaharaChenPeng:2020}, multidimensional opinions, or network constraints beyond fixed edges.  It would be very interesting to further validate our model's realism, by extending our analysis to other graph measures, data sets with ground truth opinion data, and ideally to temporal data sets showing the coevolution of network structure and opinions in the real world.

\vspace{-1em}

\bibliographystyle{alpha}
\bibliography{polarization}


\clearpage

\appendix

\section{Deferred Proofs}

We now prove Lemmas \ref{lem:add} and \ref{lem:swap} and Corollary \ref{cor:main}, which bound the change in polarization+disagreement when adding or deleting an edge, and when swapping two edges respectively.

\begin{replemma}{lem:add} Consider any unweighted graph Laplacian $L \in \R^{n \times n}$, and let $s,z \in \R^{n}$ be an innate and corresponding expressed opinion vectors in the Friedkin-Johnsen model. Let $E \in \R^{n \times n}$ be the edge Laplacian for edge $(u,v)$. Let $\delta= z(u)-z(v)$ and $r_{u,v} = \chi_{u,v}^T (I+L)^{-1} \chi_{u,v}$.
	\begin{align*}
		PD(L + E) = PD(L) - \frac{\delta^2}{1+ r_{u,v}}.\\
		PD(L - E) = PD(L) - \frac{\delta^2}{1- r_{u,v}}.
	\end{align*}
\end{replemma}
\begin{proof}
	By the Sherman-Morrison Formula:
	\begin{align*}
		s^T (I+L + E)^{-1}s &= s^T (I+L)^{-1} s - \frac{s^T (I+L)^{-1} E (I+L)^{-1} s}{1+ \chi_{u,v}^T (I+L)^{-1} \chi_{u,v}}\\
		&= PD(L) - \frac{z^T E z}{1 + \chi_{u,v}^T (I+L)^{-1} \chi_{u,v}}.\\
		s^T (I+L - E)^{-1}s &= s^T (I+L)^{-1} s + \frac{s^T (I+L)^{-1} E (I+L)^{-1} s}{1- \chi_{u,v}^T (I+L)^{-1} \chi_{u,v}}\\
		&= PD(L) + \frac{z^T E z}{1 - \chi_{u,v}^T (I+L)^{-1} \chi_{u,v}}.
	\end{align*}
	Finally, note that $z^T E z = [z(u)-z(v)]^2 = \delta^2$.
\end{proof}

\begin{replemma}{lem:swap}
	Consider any unweighted graph Laplacian $L \in \R^{n \times n}$, and let $s,z \in \R^{n}$ be the innate and corresponding expressed opinion vectors in the Friedkin-Johnsen model.  Let $E_1$ be the edge Laplacian for edge $(u_1,v_1)$ and $\delta_1= z(u_1)-z(v_1)$. Let $E_2$ be the edge Laplacian for edge $(u_2,v_2)$ and $\delta_2= z(u_2)-z(v_2)$. Assume that the edge $(u_2,v_2)$ is in the graph corresponding to $L$. Let $r_1 = \chi_{1}^T (I+L)^{-1} \chi_{1}$, $r_2 = \chi_{2}^T (I+L)^{-1} \chi_{2}$, and $r_{2,1} = \chi_2^T (L+I+E_1)^{-1}\chi_2$.
\end{replemma}
\begin{proof}
	We apply Lemma \ref{lem:add} in sequence to give:
	\begin{align*}
		PD(L+E_1-E_2) = PD(L) - \frac{\delta_1^2}{1 + r_1} + \frac{s^T (L+I+E_1)^{-1} E_2 (L+I+E_1)^{-1} s}{1-\chi_2^T (L+I+E_1)^{-1}\chi_2}.
	\end{align*}
	We then expand out the third term using Sherman-Morrison again:
	\begin{align*}
		PD(L+E_1-E_2) &= PD(L) - \frac{\delta_1^2}{1 + r_1} + \frac{\delta_2^2}{1-r_{2,1}} \\ &+ \frac{s^T (L+I)^{-1} E_1 (L+I)^{-1} E_2 (L+I)^{-1} E_1 (L+I)^{-1}s }{(1-r_{2,1}) \cdot (1+r_1)^2}\\ &- \frac{2 s^T (L+I)^{-1} E_1 (L+I)^{-1} E_2 (L+I)^{-1}s}{(1-r_{2,1}) \cdot (1+r_1)}.
	\end{align*}
	Letting $\alpha = \frac{\chi_1^T (L+I)^{-1} \chi_2}{1+r_1}$, we can simplify the above to:
	\begin{align*}
		PD(L+E_1-E_2) &= PD(L) - \frac{\delta_1^2}{1 + r_1} + \frac{\delta_2^2}{1-r_{2,1}} + \frac{\alpha^2 \cdot \delta_1^2}{1-r_{2,1}} - \frac{2  \alpha \delta_1 \delta_2}{1-r_{2,1}}\\
		&= PD(L) - \frac{\delta_1^2}{1 + r_1} + \frac{(\delta_2 -\alpha \delta_1)^2}{1-r_{2,1}},
	\end{align*}
	completing the bound. Finally, note that since $L+I$ is positive definite, we can bound $|\alpha|$ using Cauchy-Schwarz by:
	\begin{align*}
		|\alpha| \le \frac{\sqrt{\chi_1^T (L+I)^{-1} \chi_1 \cdot \chi_2^T (L+I)^{-1} \chi_2}}{1+r_1} = \frac{\sqrt{r_1\cdot r_2}}{1+r_1}.
	\end{align*}
\end{proof}

\begin{repcorollary}{cor:main}
	Consider the setting of Lemma \ref{lem:swap}. If $|\delta_2| > \frac{3\sqrt{3}}{4} \cdot |\delta_1|$ then 
	$PD(L+E_1-E_2) > PD(L).$
\end{repcorollary}
\begin{proof}
	First note that since we assume edge $(u_2,v_2)$ is in the graph, $r_{2,1} \le r_2 < 1$. Thus by Lemma \ref{lem:swap} we have:
	\begin{align*}
		PD(L+E_1-E_2) > PD(L) - \frac{\delta_1^2}{1+r_1} + \left (|\delta_2|-\frac{\sqrt{r_1}}{1+r_1} \cdot |\delta_1| \right )^2.
	\end{align*}
	Writing $|\delta_2| = \gamma \cdot |\delta_1|$ and solving for $PD(L+E_1-E_2) = PD(L)$,
	\begin{align*}
		\frac{\delta_1^2}{1+r_1} = \left (\gamma-\frac{\sqrt{r_1}}{1+r_1} \right )^2 \cdot \delta_1^2.
	\end{align*}
	Solving for $\gamma$ under the constaint that $\gamma > 0$ gives $\gamma = \frac{\sqrt{r_1}}{1+r_1}+ \frac{1}{\sqrt{1+r_1}}.$ $\gamma$ is maximized for all $r_1 \ge 0$ at $r_1 = 1/3$ and $\gamma = \frac{3\sqrt{3}}{4}$, which completes the proof.
\end{proof}

\section{Dataset Details}\label{app:data}

\begin{itemize}
	\item \textbf{Reddit }\cite{DeAbir:2014}\\
	\textit{$n = 546$ nodes, $e = 8962$ edges}\\
	In this dataset, nodes (users) have an edge between them if there exist two subreddits in which both users posted during a given time period.  Files for this data set were obtained from previous work that cites the original source \cite{MuscoMuscoTsourakakis:2018}.
	\item \textbf{Twitter }\cite{DeAbir:2014}\\
	\textit{$n = 531$ nodes, $e = 3621$ edges}\\
	In this dataset aimed at analyzing discourse around the Delhi legislative elections of 2013, edges represent user interactions on the Twitter platform, discerned with the use of topical hashtags.  Files for this data set were obtained from previous work that cites the original source \cite{MuscoMuscoTsourakakis:2018}.
	\item \textbf{Facebook Egograph }\cite{LeskovecJureMcauley:2012}\\
	\textit{$n = 3964$ nodes, $e = 88159$ edges}\\
	Consists of ten anonymized ego networks, which are social circles of Facebook users -- the ten overlapping networks are combined into a single connected component.
\end{itemize}

\section{Additional Plots}\label{app:plots}

\begin{figure}[H]
	\minipage{0.49\textwidth}
	\includegraphics[width=\linewidth]{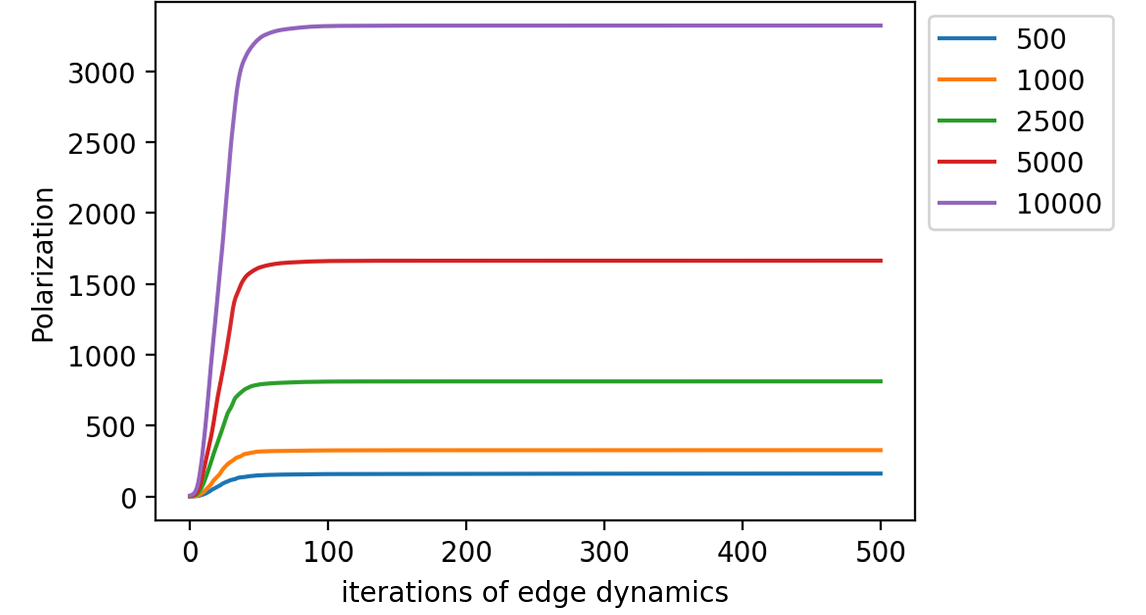}
	\endminipage
	\vspace{-.5em}
	\caption{Polarization over time for varying sizes $n$ of random ER graphs with average degree $25$, yielding $25(n-1)/2$ expected edges, and no fixed edges.  Dynamics include both friend-of-friend recommendations and confirmation bias. Innate opinions are distributed uniformly on the interval $[-1,1]$.  While the asymptotic value of polarization changes proportionally to the number of nodes in the graph, the polarizing behavior of edge dynamics generalizes across different graph sizes.}\label{fig:sizes}
	\minipage{0.40\textwidth}
	\includegraphics[width=\linewidth]{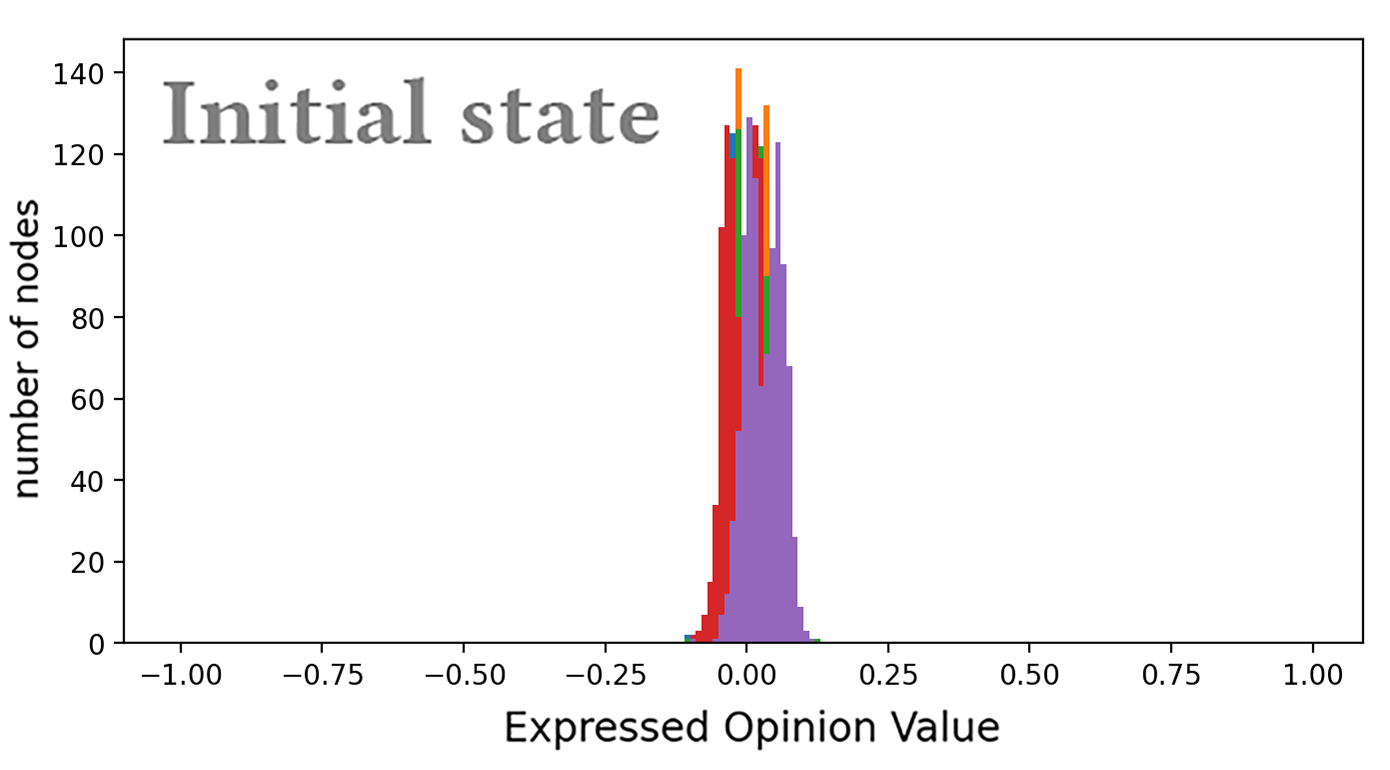}
	\endminipage\hfill
	\minipage{0.40\textwidth}
	\includegraphics[width=\linewidth]{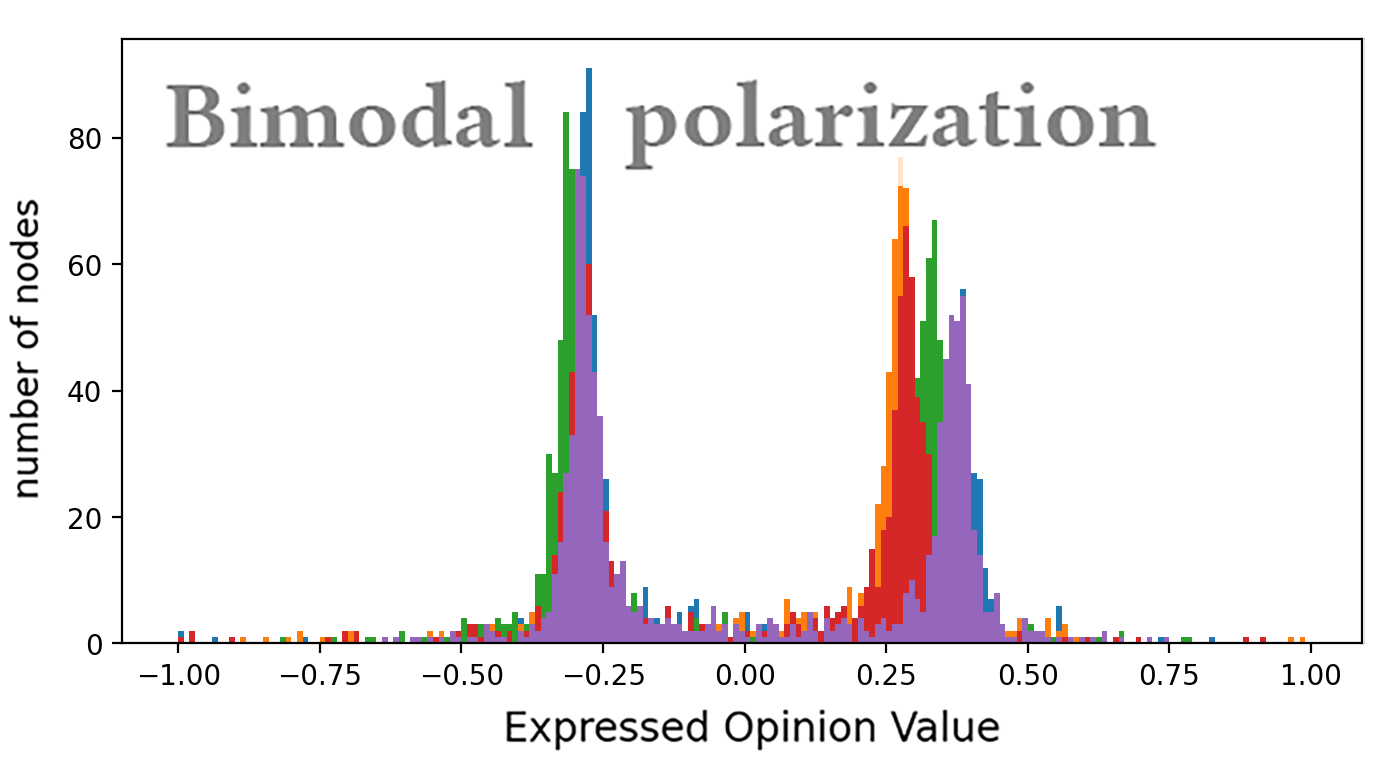}
	\endminipage\hfill
	\caption{Two states of opinion polarization for an Erd\"{o}s-Renyi graph with $n = 1000$ nodes, uniform innate opinions on the interval $[-1,1]$, and 5\% of edges fixed. Plots are histograms of the expressed opinions at time steps $t=5$ (top) and $t=120$ (bottom). Each colors represent one of 5 independent trials. In the \textit{initial state}, the opinions center around $0$, due to opinion averaging. The second plot shows the \textit{bimodal polarization} state, where opinions cluster into groups on either side of the mean.  Since the presence of fixed edges prevents the network from splintering further, this bimodal state is the steady state in the network.}\label{fig:hist1}
	\vspace{-.5em}
\end{figure}

\begin{figure}[H]
	\minipage{0.45\textwidth}
	\includegraphics[width=\linewidth]{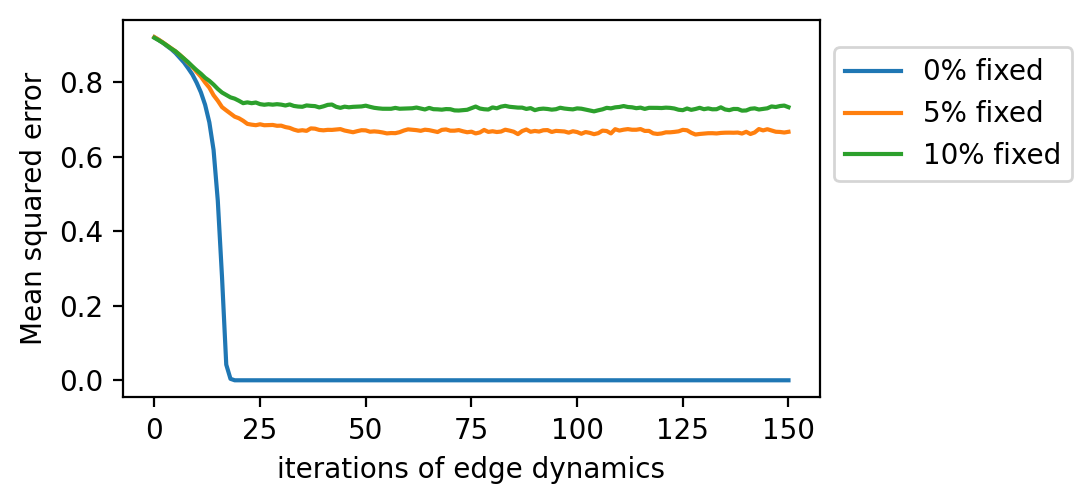}
	\endminipage
	\vspace{-.5em}
	\caption{Mean squared error over time between expressed opinions and innate opinions for an Erd\"{o}s-Renyi graph, with varying percentages of \emph{fixed edges}, as described in Fig. \ref{fig:fixedIntro}.  Innate opinions are uniformly assigned from $\{-1,1\}$.  When fixed edges are not present, the network can splinter and sort nodes such that expressed opinions are indistinguishable from innate opinions.  Otherwise, the mean squared error between expressed and innate opinions is reduced proportionally to the rise in polarization.}\label{fig:mse}
	\vspace{1em}
	\minipage{0.45\textwidth}
	\includegraphics[width=\linewidth]{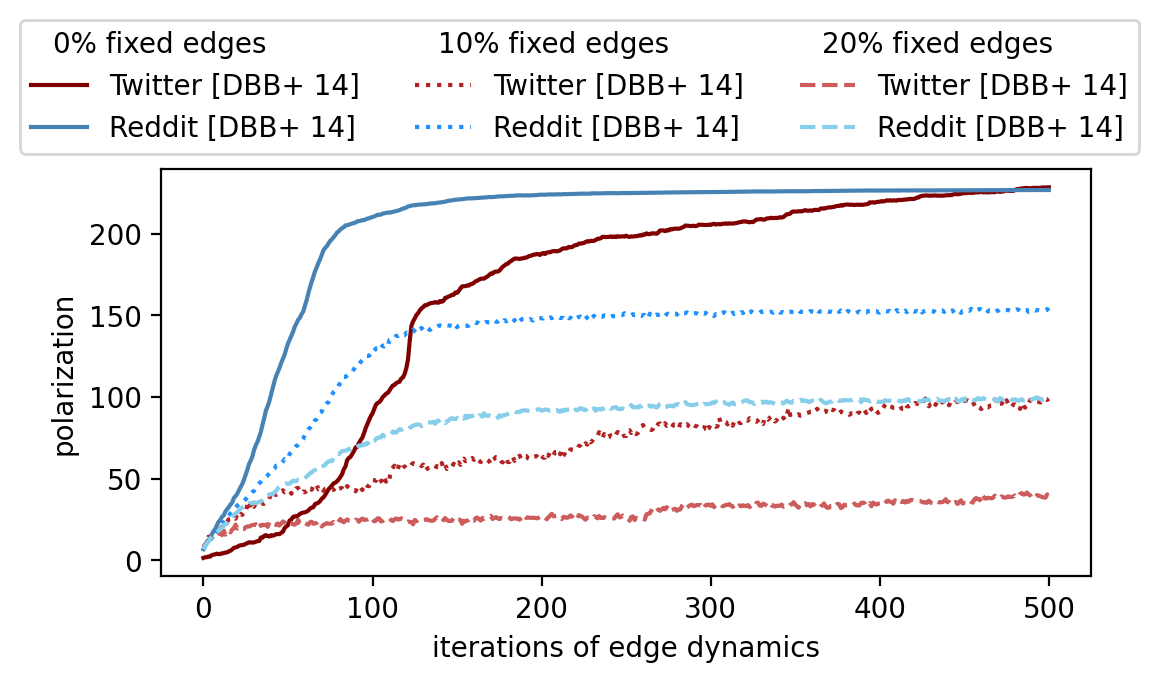}
	\endminipage
	\vspace{-1em}
	\caption{Varying percentages of fixed edges, on Twitter and Reddit real-world networks \cite{DeAbir:2014}. Edge dynamics include both friend-of-friend recommendations and confirmation bias.}\label{fig:fixed2}
	\vspace{1em}
	\minipage{0.40\textwidth}
	\includegraphics[width=\linewidth]{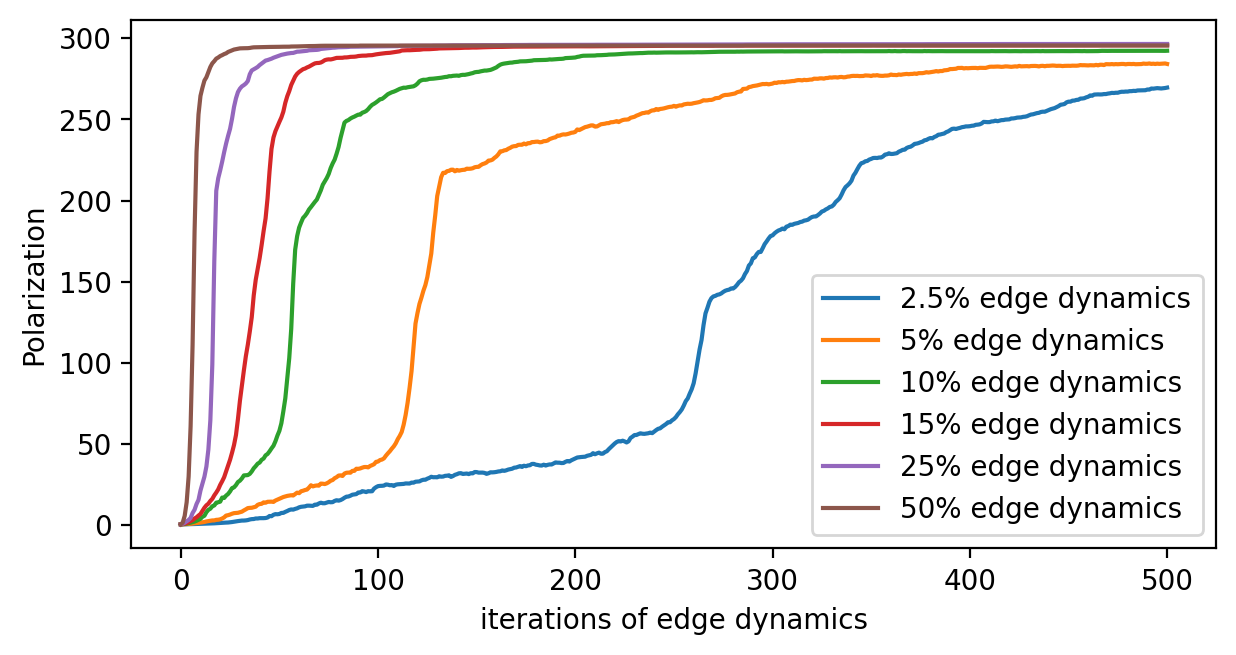}
	\endminipage
	\caption{Polarization over time for an ER graph with $n=1000$ nodes and fixed degree $25$, yielding $25(n-1)/2$ expected edges. Dynamics include both friend-of-friend recommendations and confirmation bias. We vary the percentage of edges which are added and removed at each time step, showing that the choice of this percentage simply scales the amount of time steps necessary for convergence.   Innate opinions are distributed uniformly on the interval $[-1,1]$.  }\label{fig:conn}
	\vspace{-.5em}
\end{figure}

\begin{figure*}
	\minipage{0.32\textwidth}
	\includegraphics[width=\linewidth]{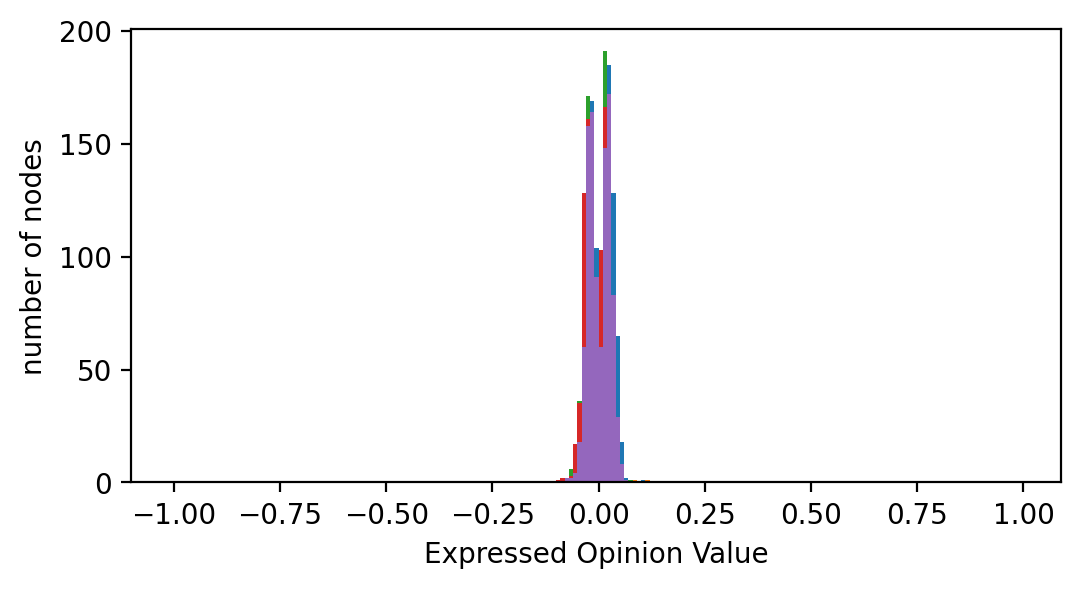}
	\vspace{-2em}
	\caption*{Initial state}
	\endminipage\hfill
	\minipage{0.32\textwidth}
	\includegraphics[width=\linewidth]{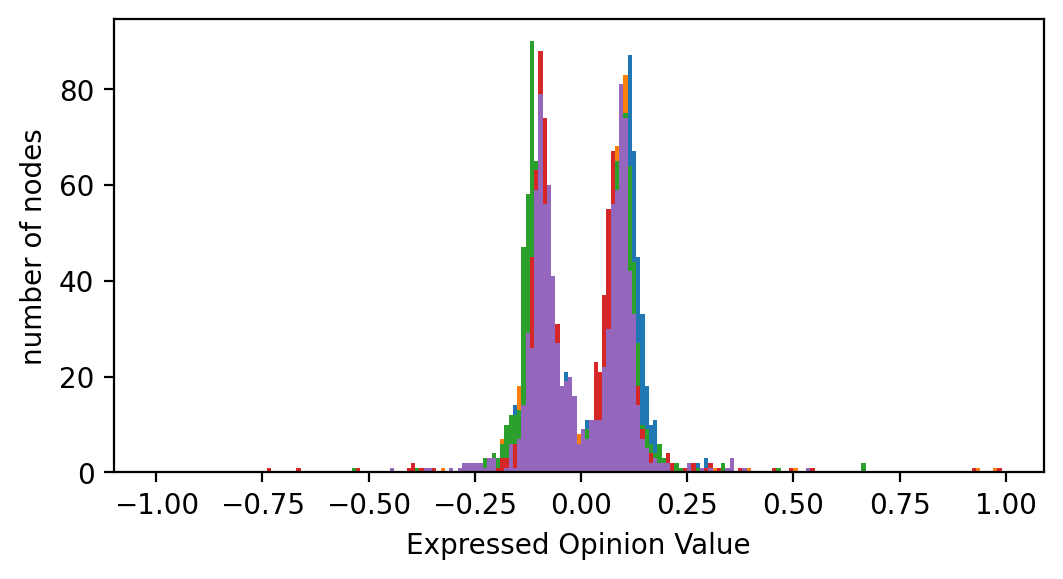}
	\vspace{-2em}
	\caption*{Bimodal polarization}
	\endminipage\hfill
	\minipage{0.32\textwidth}%
	\includegraphics[width=\linewidth]{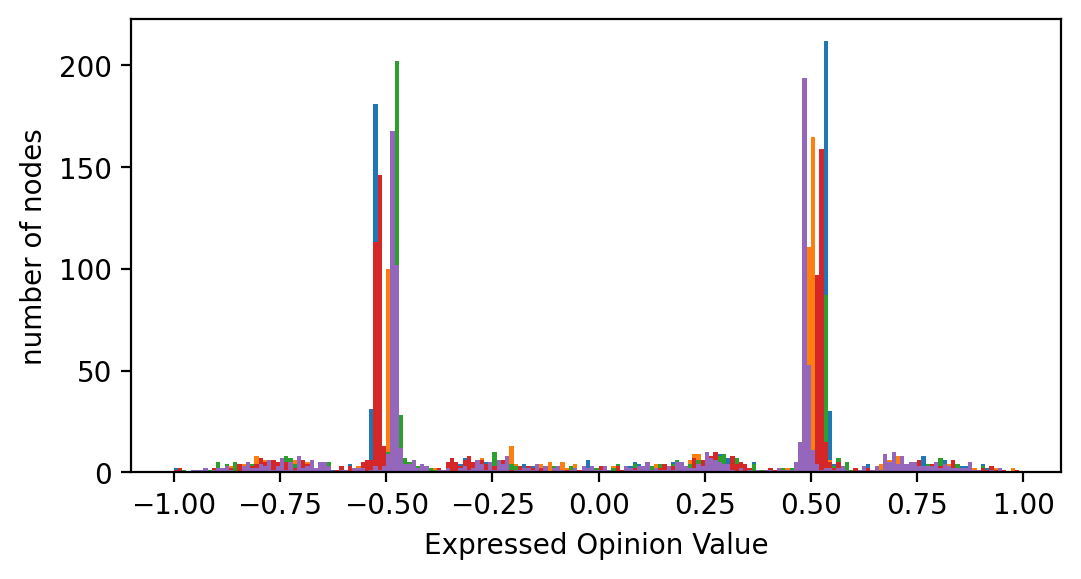}
	\vspace{-2em}
	\caption*{Maximal polarization}
	\endminipage\hfill
	\vspace{-0.75em}
	\caption{States of opinion polarization for an Erd\"{o}s-Renyi graph with no fixed edges, innate opinions sampled from mixture of two Gaussians with $\mu = \pm0.5$ and $\sigma = 0.2$, on interval $[-1,1]$. Plots are histograms of the expressed opinions at $t=1$, $t=20$, and $t=300$ time steps, from left to right. Colors represent 5 independent trials.}\label{fig:histBim}
\end{figure*}

\begin{figure}[H]
	\minipage{0.45\textwidth}
	\includegraphics[width=\linewidth]{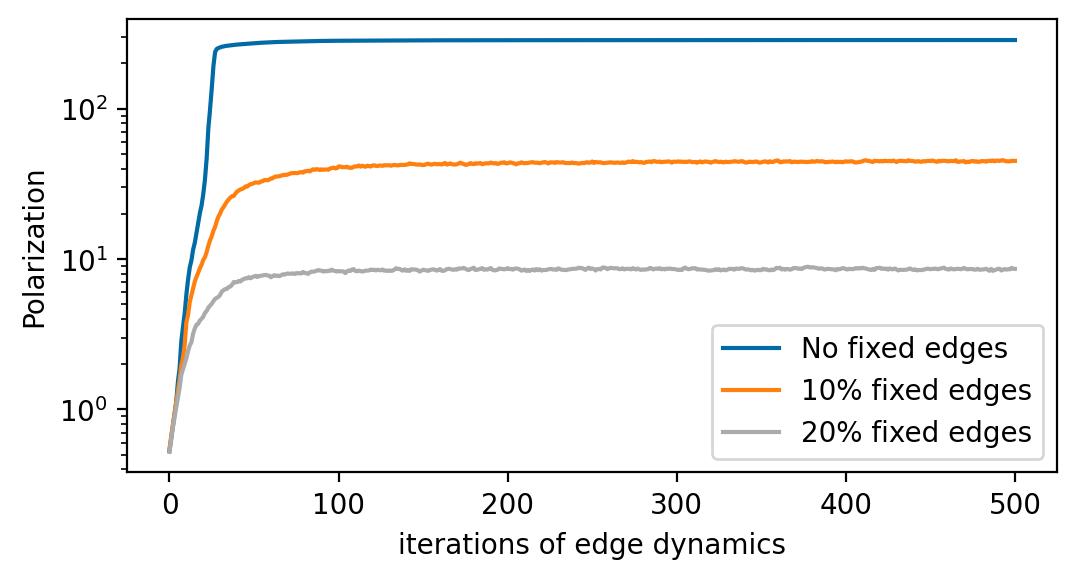}
	\endminipage
	\vspace{-1em}
	\caption{Polarization over time for an an Erd\"{o}s-Renyi graph with $1000$ nodes, varying percentages of \emph{fixed edges}, and average degree $25$, yielding $25(n-1)/2$ expected edges. Dynamics include both friend-of-friend recommendations and confirmation bias. Innate opinions are sampled from mixture of two Gaussians with $\mu = \pm0.5$ and $\sigma = 0.2$, on interval $[-1,1]$.  In this different setting for the innate opinions, polarization reaches a steady state slightly faster, but the behavior of our model is similar.}\label{fig:bimodalPL}
	\vspace{1em}
	\minipage{0.45\textwidth}
	\includegraphics[width=\linewidth]{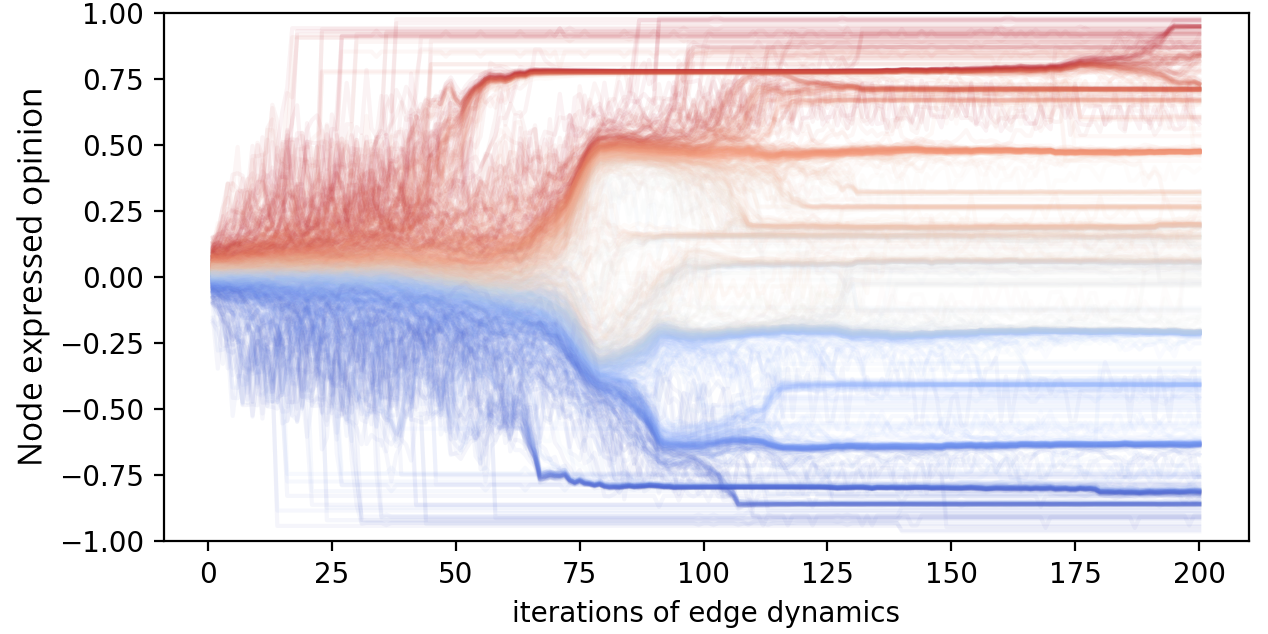}
	\endminipage
	\vspace{-.5em}
	\caption{Evolution of individual opinions on a Barab\'{a}si-Albert graph with $1000$ nodes, $m = 10$, and no fixed edges, with uniform random innate opinions.  Each line represents the evolving expressed opinion of one node. The color of each  line represents that node's innate opinion, ranging on a gradient from -1 (blue) to 1 (red). We see opinions initially grouped near $0$ (\textbf{Initial state}).  We then see a bifurcation of the major opinion clusters, denoted by the darker \& more saturated lines (\textbf{Bimodal polarization}), and finally, a spreading of opinions into clusters based on their innate opinions, representing  (\textbf{Maximal polarization}).}\label{fig:branch2}
\end{figure}

\begin{figure}[H]
	\minipage{0.4\textwidth}
	\vspace{-1em}
	\includegraphics[width=\linewidth]{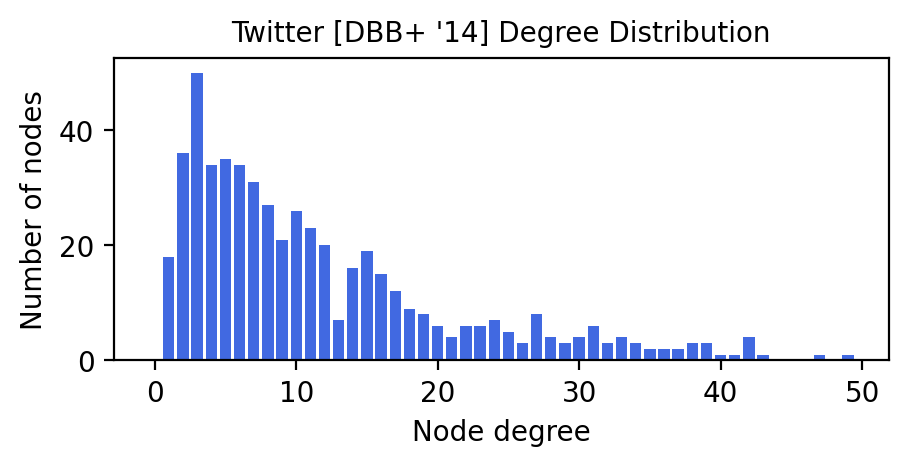}
	\vspace{-2.5em}
	\endminipage
	\caption{Degree distribution histogram for Twitter data set \cite{DeAbir:2014}.  The $x$-axis denotes the degree, and the height of each bar is the number of nodes with that degree.}\label{fig:twitDeg}
	\vspace{0em}
	\minipage{0.4\textwidth}
	\vspace{2em}
	\includegraphics[width=\linewidth]{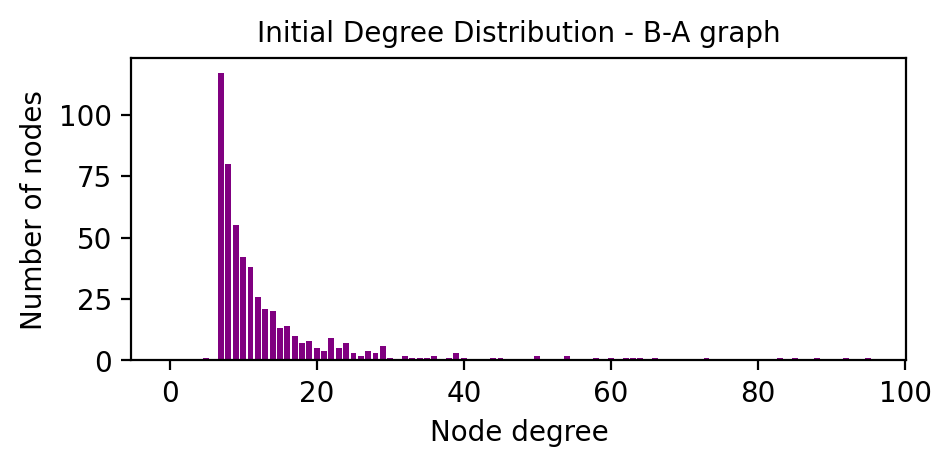}
	\endminipage\\
	\minipage{0.4\textwidth}
	\includegraphics[width=\linewidth]{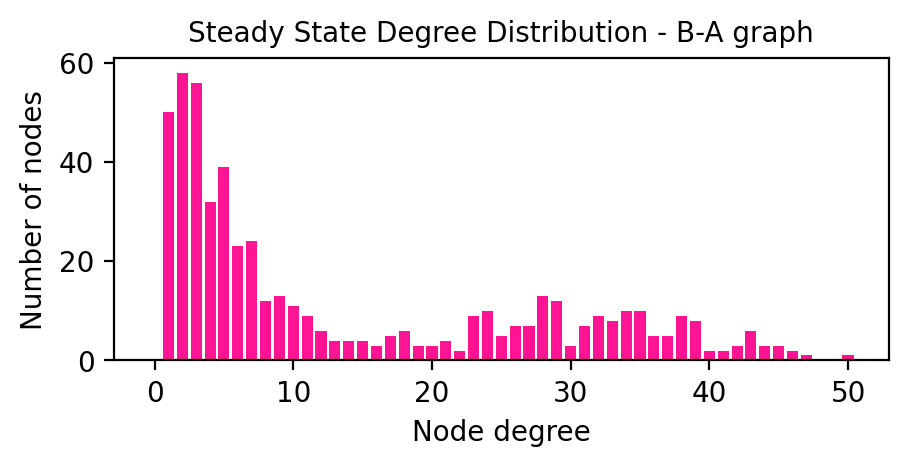}
	\vspace{-1em}
	\endminipage
	\caption{Initial and steady state degree distribution histograms for BA generated network, validating against Twitter data set \cite{DeAbir:2014}.  The $x$-axis denotes the degree, and the height of each bar is the number of nodes with that degree.  The steady state histogram preserves a power law degree distribution, reflecting what can be considered a realistic social network structure.}\label{fig:BAvalidDeg}
	\vspace{-1em}
\end{figure}

\begin{figure}[H]
	\minipage{0.4\textwidth}
	\vspace{-1em}
	\includegraphics[width=\linewidth]{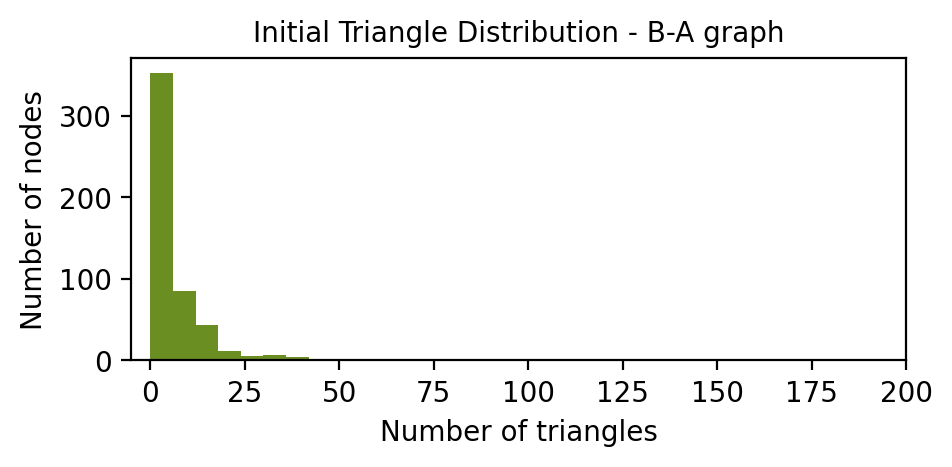}
	\endminipage\\
	\minipage{0.4\textwidth}
	\includegraphics[width=\linewidth]{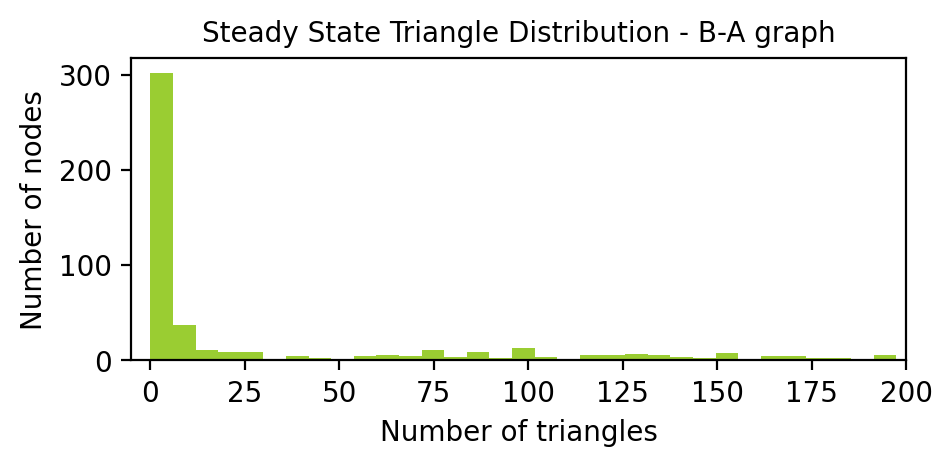}
	\vspace{-1em}
	\endminipage
	\caption{Initial and steady state triangle distribution histograms for BA generated network, validating against Twitter data set \cite{DeAbir:2014}.  The $x$-axis denotes the number of triangles, and the height of each bar is the number of nodes with that triangle count.  The steady state histogram shows that edge dynamics preserve the distribution of triangles in the initial graph, which is similar to the distribution in the Twitter data set.}\label{fig:BAvalidTri}
	\vspace{1em}
	\minipage{0.4\textwidth}
	\includegraphics[width=\linewidth]{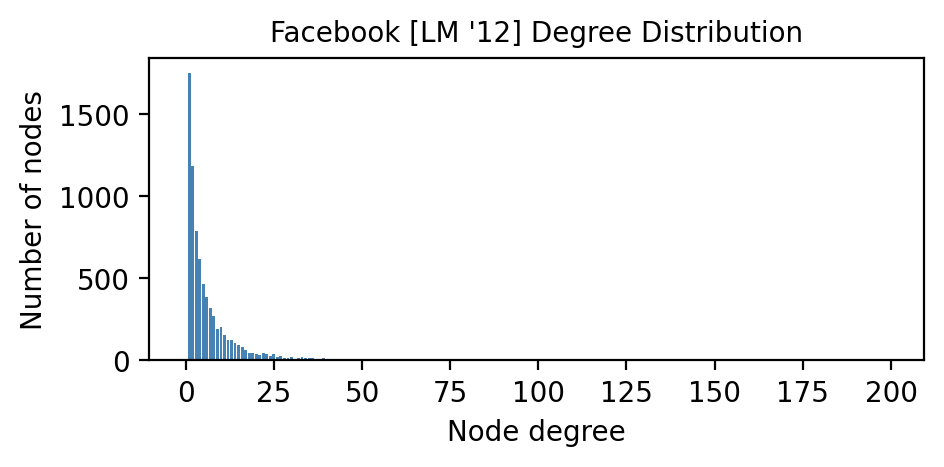}
	\vspace{-2.5em}
	\endminipage
	\caption{Degree distribution histogram for Facebook data set \cite{LeskovecJureMcauley:2012}.  The $x$-axis denotes the degree, and the height of each bar is the number of nodes with that degree.}
	\vspace{1em}
	\minipage{0.45\textwidth}%
	\includegraphics[width=\linewidth]{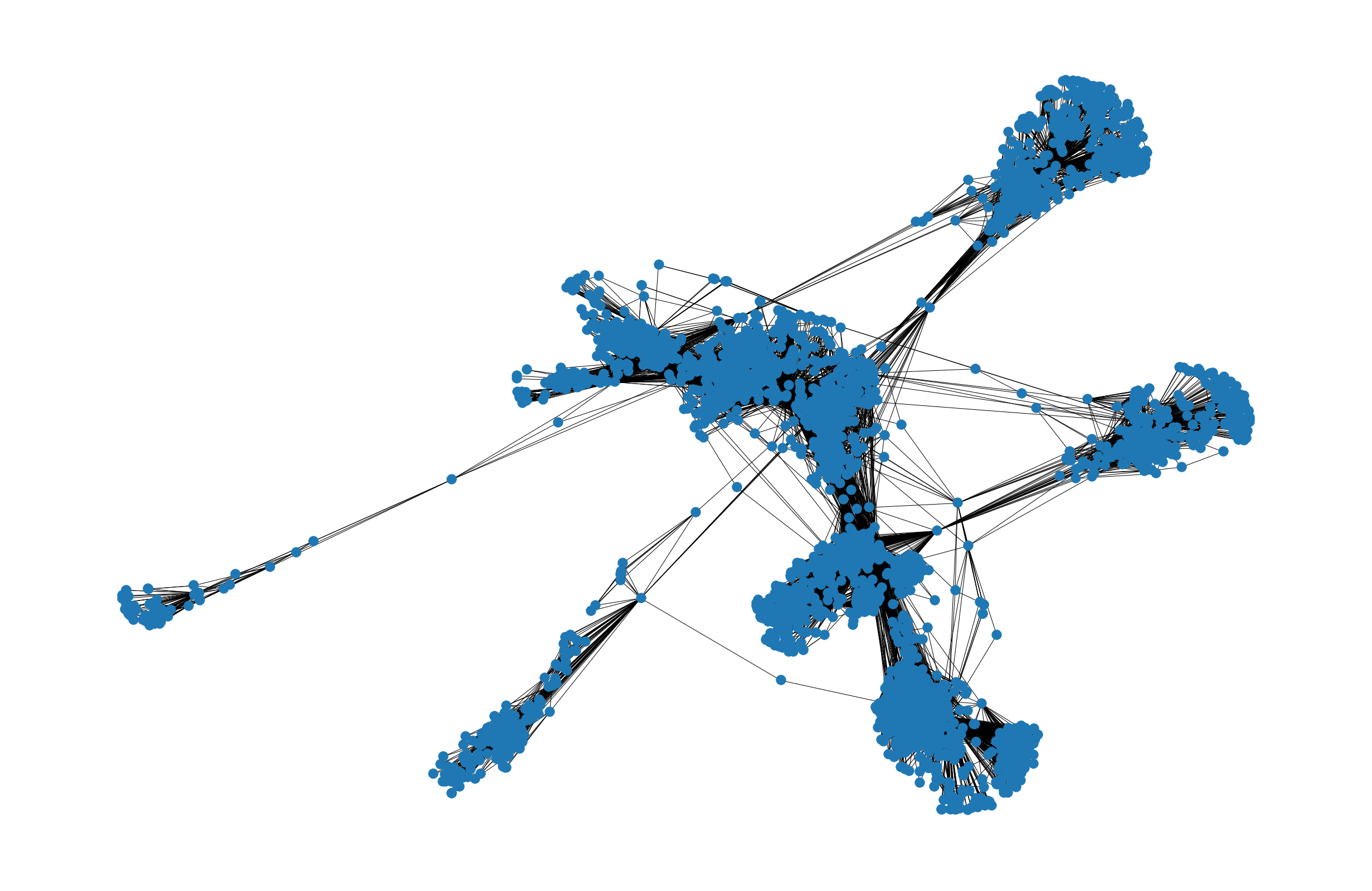}
	\caption{Facebook Egograph \cite{LeskovecJureMcauley:2012} real-world snapshot.}\label{fig:facebook}
	\endminipage\hfill
\end{figure}

\begin{figure}[H]
	\minipage{0.45\textwidth}
	\includegraphics[width=\linewidth]{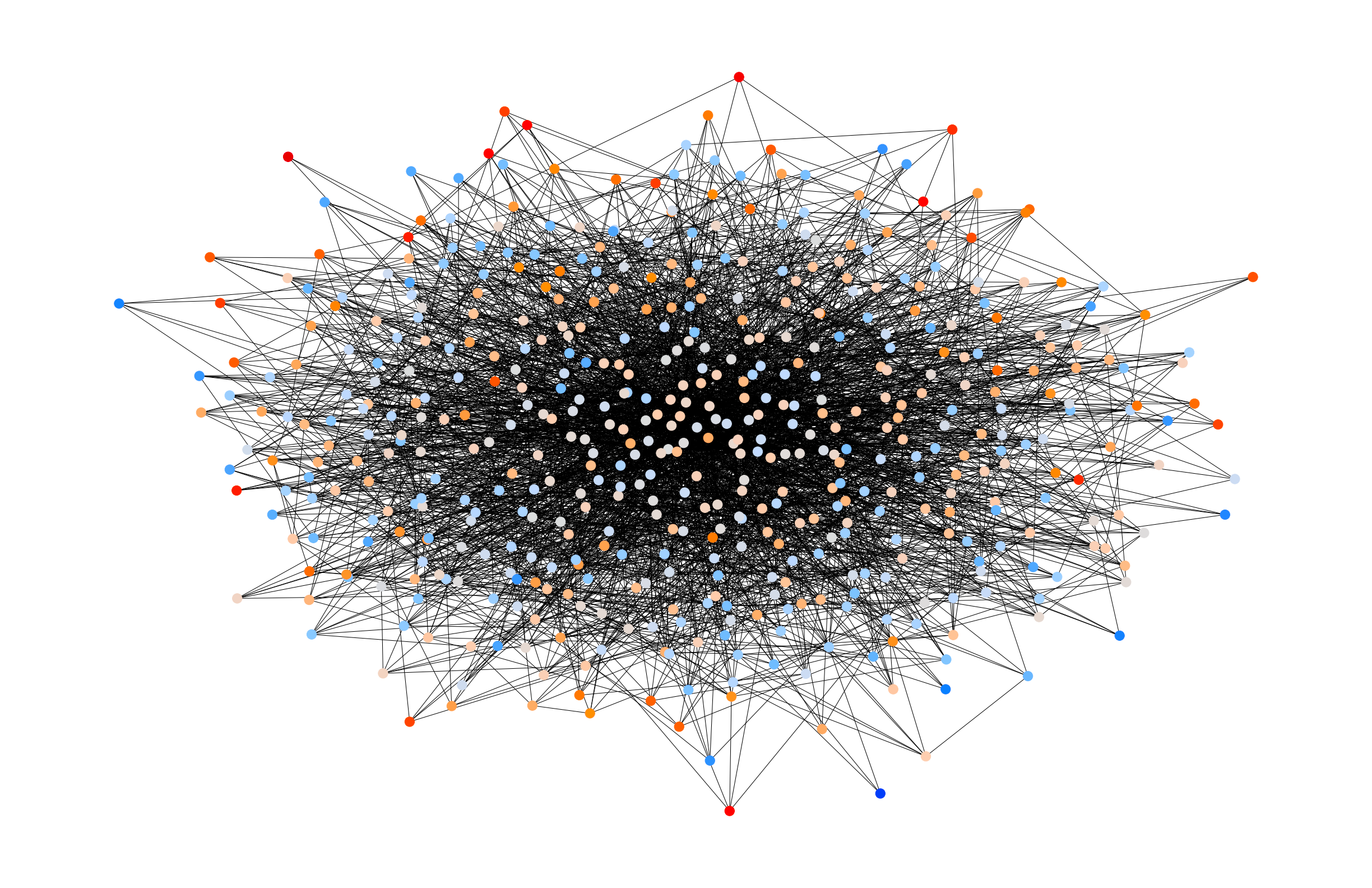}
	\vspace{-3em}
	\caption{\textit{Initial} Barab\'{a}si-Albert graph with 25\% fixed edges, before edge dynamics simulation.}\label{fig:baInit}
	\endminipage\newline
	\minipage{0.45\textwidth}%
	\includegraphics[width=\linewidth]{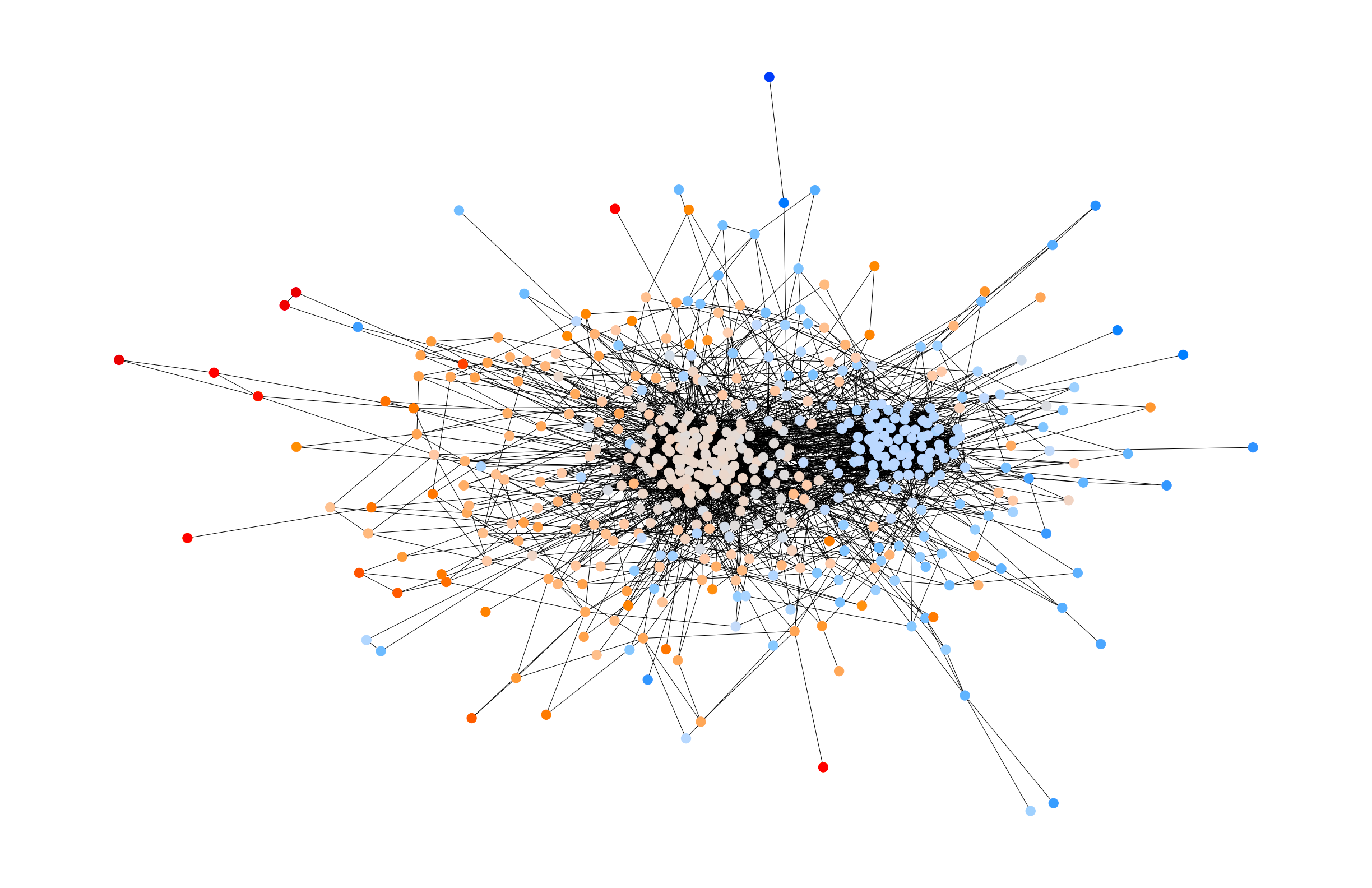}
	\vspace{-3em}
	\caption{\textit{Steady state} Barab\'{a}si-Albert graph with 25\% fixed edges, after edge dynamics simulation.}\label{fig:baConverged}
	\endminipage\newline
	\minipage{0.41\textwidth}
	\vspace{2em}
	\includegraphics[width=\linewidth]{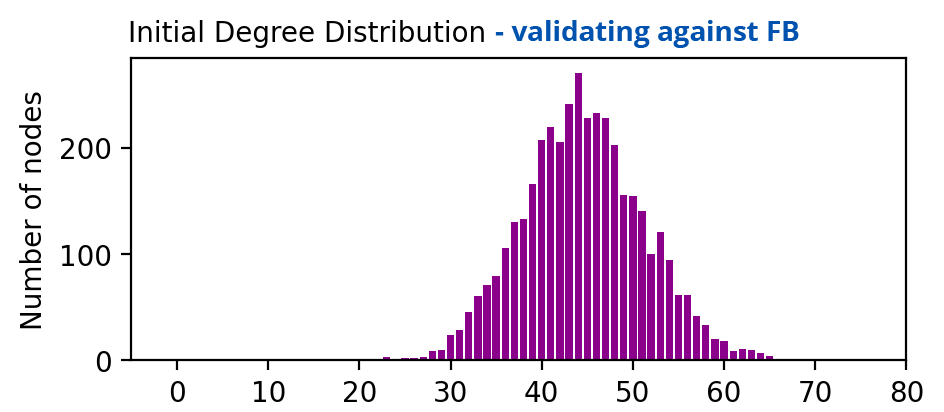}
	\endminipage\\
	\minipage{0.4\textwidth}
	\includegraphics[width=\linewidth]{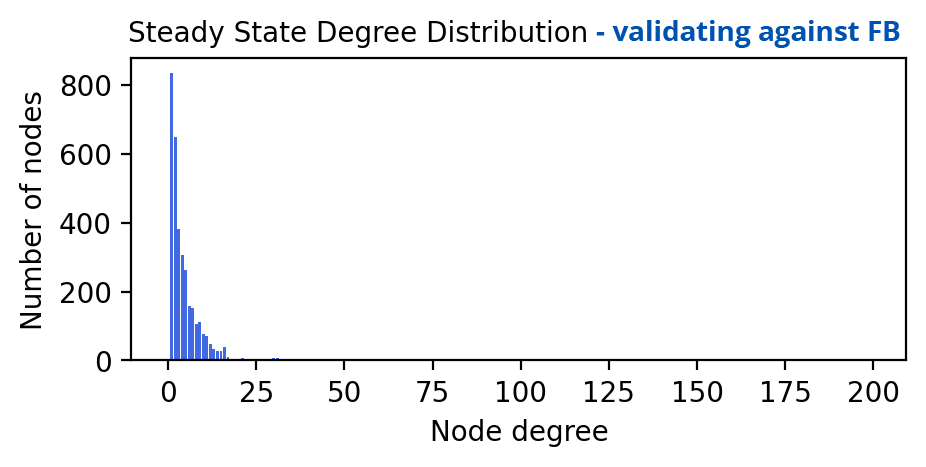}
	\vspace{-1.5em}
	\endminipage
	\caption{Initial and steady state degree distribution histograms for ER generated network, validating against Facebook data set \cite{LeskovecJureMcauley:2012}.  The $x$-axis denotes the degree, and the height of each bar is the number of nodes with that degree.  The steady state distribution differs significantly from the initial distribution. It is closer to a power law distribution, reflecting a more realistic network structure.}\label{fig:FacevalidDeg}
	\vspace{-1em}
\end{figure}

\begin{figure}[H]
	\minipage{0.41\textwidth}
	\includegraphics[width=\linewidth]{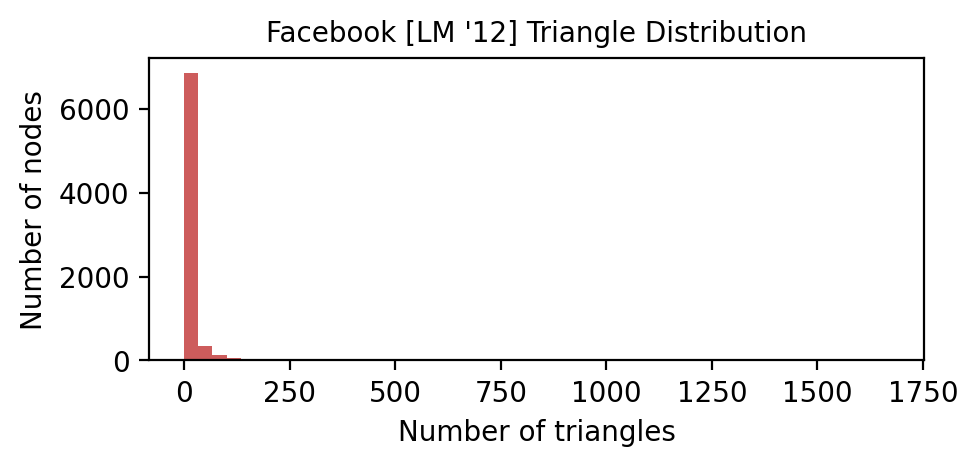}
	\vspace{-2.5em}
	\endminipage
	\caption{Triangle distribution histogram for Facebook data set \cite{DeAbir:2014}.  The $x$-axis denotes the number of triangles incident on a node, and the height of each bar is the number of nodes with that number of triangles.}
	\vspace{2em}
	\minipage{0.41\textwidth}
	\vspace{-1em}
	\includegraphics[width=\linewidth]{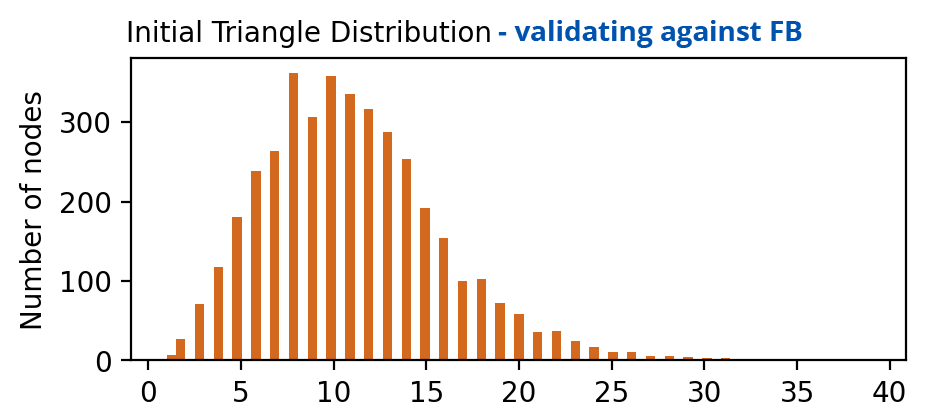}
	\endminipage\\
	\minipage{0.43\textwidth}
	\includegraphics[width=\linewidth]{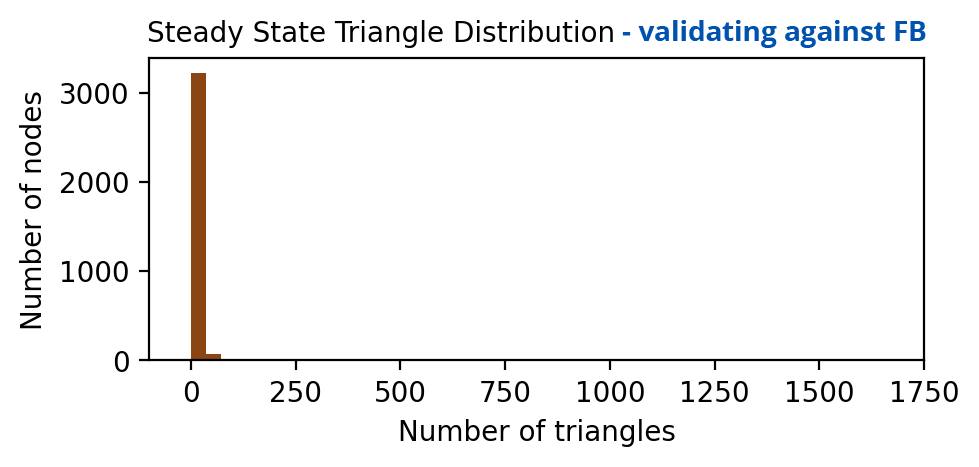}
	\vspace{-1em}
	\endminipage
	\caption{Initial and steady state triangle distribution histograms for BA generated network, validating against Twitter data set \cite{DeAbir:2014}.  The $x$-axis denotes the number of triangles, and the height of each bar is the number of nodes with that triangle count.  The steady state histogram shows that edge dynamics preserve the distribution of triangles in the initial graph, which is similar to the distribution in the Twitter data set.}\label{fig:FacevalidTri}
\end{figure}

\begin{figure}[H]
	\minipage{0.5\textwidth}
	\includegraphics[width=\linewidth]{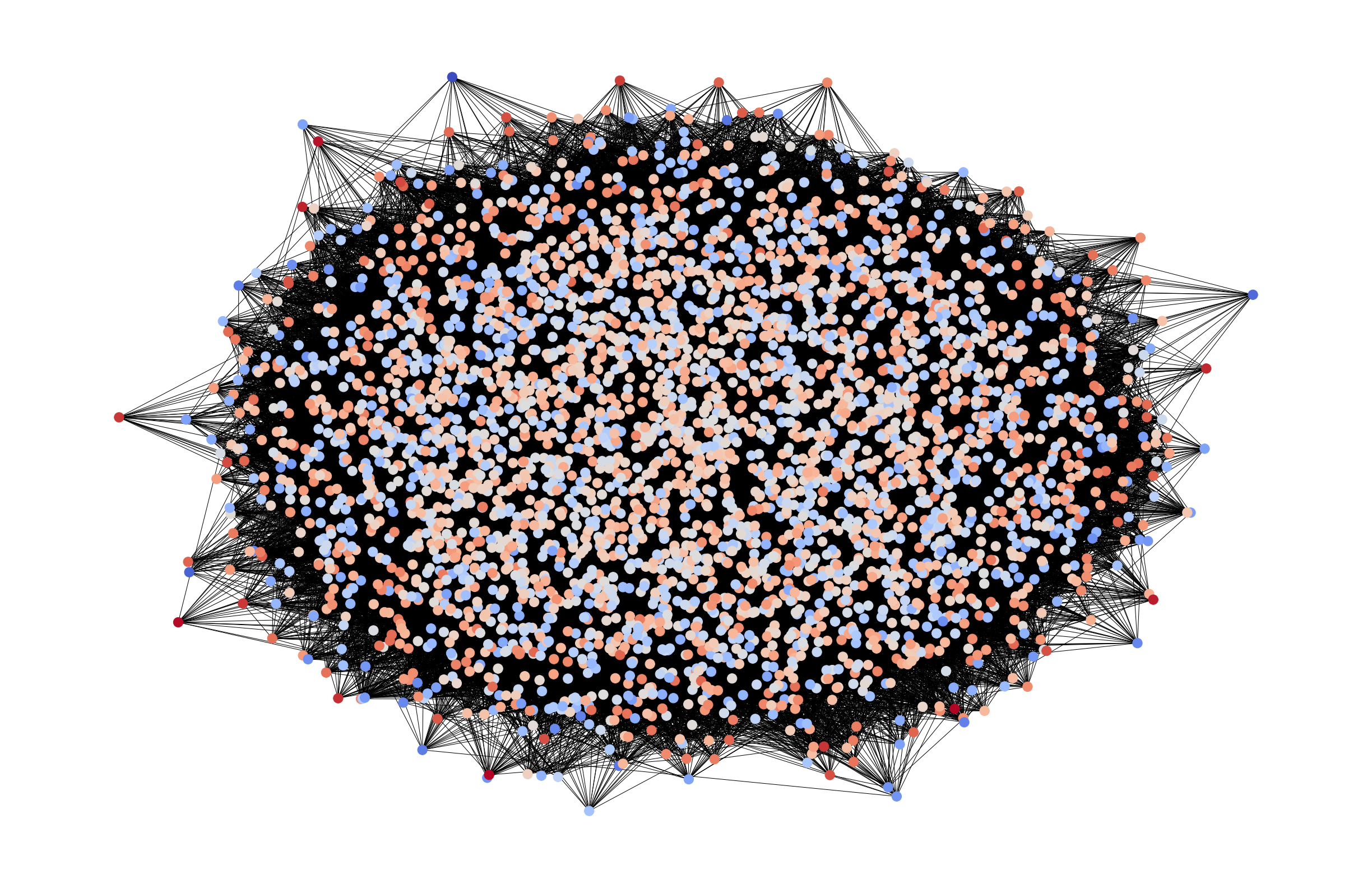}
	\caption{\textit{Initial} Erd\"{o}s-Renyi graph with $3964$ nodes, $87872$ edges, and 5\% fixed edges, validating against Facebook data set; visualized before edge dynamics simulation.}\label{fig:erFaceInit}
	\endminipage\newline
	\minipage{0.5\textwidth}%
	\includegraphics[width=\linewidth]{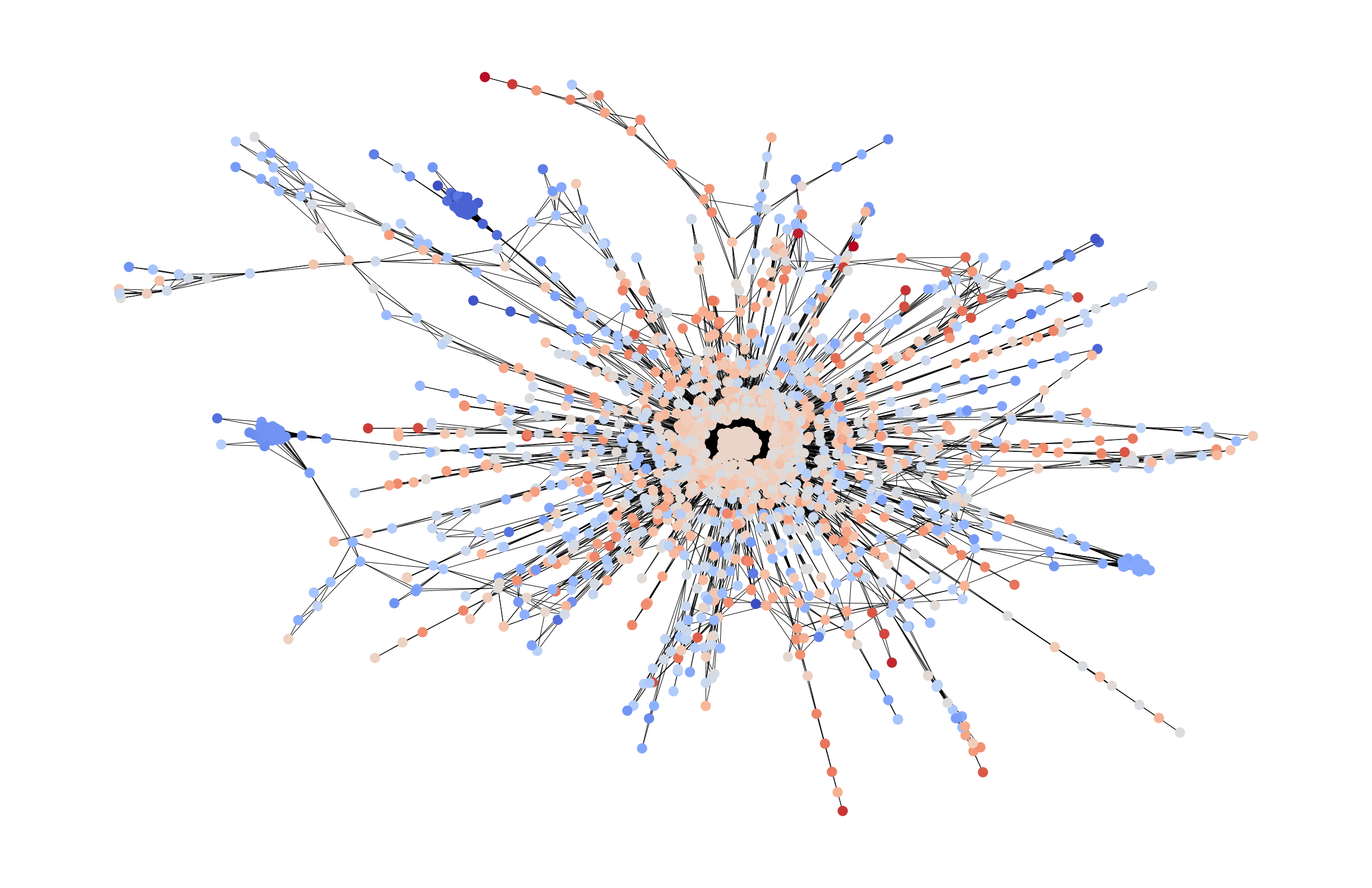}
	\caption{\textit{Steady state} Erd\"{o}s-Renyi graph with $3964$ nodes, $87872$ edges, and 5\% fixed edges, validating against Facebook data set; visualized after edge dynamics simulation.}\label{fig:erFaceConverged}
	\endminipage
\end{figure}

\end{document}